\newtheorem{example}{Example}
\newtheorem{theorem}{Theorem}
\newtheorem{lemma}{Lemma}
\newtheorem{proposition}{Proposition}
\newtheorem{corollary}{Corollary}
\newtheorem{claim}{Claim}
\theoremstyle{definition}
\newtheorem{definition}{Definition}[section]
\title{Explaining Preferences by Multiple Patterns in Voters' Behavior}
\author{
Sonja Kraiczy $^1$
\and
Edith Elkind$^1$
\affiliations
$^1$ Department of Computer Science, University of Oxford\\
\emails
Sonja.Kraiczy@merton.ox.ac.uk,
eelkind@cs.ox.ac.uk
}
\begin{document}

\maketitle

\begin{abstract} 
In some preference aggregation scenarios, voters' preferences are highly structured: e.g., the set of candidates may have one-dimensional structure (so that voters' preferences are single-peaked) or be described by a binary decision tree (so that voters' preferences are group-separable). However, sometimes a single axis or a decision tree is insufficient to capture the voters' preferences; rather, there is a small number $k$ of axes or decision trees such that each vote in the profile is consistent with one of these axes (resp., trees). In this work, we study the complexity of deciding whether voters' preferences can be explained in this manner. For $k=2$, we use the technique developed by Yang~[2020] 
in the context of single-peaked preferences to obtain a polynomial-time algorithm for several domains: value-restricted preferences, group-separable preferences, and a natural subdomain of group-separable preferences, namely, caterpillar group-separable preferences. For $k\ge 3$, the problem is known to be hard for single-peaked preferences; we show that this is also the case for value-restricted and group-separable preferences. Our positive results for $k=2$ make use of forbidden minor characterizations of the respective domains; in particular, we establish that the domain of caterpillar group-separable preferences admits a forbidden minor characterization.
\end{abstract}

\section{Introduction}
A country X is about to have a general election. Each political party in X can be identified with a position on the left-to-right political spectrum. In addition, each party has also formulated a covid-19 policy, regarding issues such as vaccination requirements, school closures and mask mandates. These policies provide an alternative ordering of the parties, from those that support stringent measures to those that are opposed to any restrictions. The two orderings are quite different: e.g., while one of the left-wing parties believes that restrictions are harmful to their electorate, another one supports extreme virus control measures. 

Alice, Bob, Carol and Dave are planning to vote in this election.
Alice and Bob's preferences are driven by the parties' positions on covid-19, even though their own preferences concerning covid-19 measures are very different: Alice strongly supports the continued use of NPIs, while Bob is opposed to them. However, both Alice and Bob completely ignore the parties' positions on the traditional left-to-right spectrum. In contrast, Carol and Dave believe that the pandemic will be over soon in any case, and rank the parties 
based on their social and economic policies. Thus, in this case collective preferences are driven by two axes on the set of candidates, 
with each voter's ranking being consistent with one of the axes (and ignoring the other axis).

Now, suppose that we expect the collective preferences to have this general shape (potentially with $k\ge 2$ axes), but we do not know the underlying $k$ orderings: can we identify them, and the associated partition of the voter set, in polynomial time? This question is just as relevant for other notion of structure: while the idea of preferences being consistent with an axis is captured by the mathematical concept of single-peaked preferences \cite{B1948}, we can also consider single-crossing \cite{M71,R77}
or group-separable preferences \cite{I64}. The latter domain, which has received comparatively less attention in the computational social choice literature (see, however, the recent work of Faliszewski et al.~[\citeyear{FKO21}]), 
consists of preference profiles 
that can be explained by binary decision trees:
each alternative is characterized by a set of binary attributes, 
each voter has a preferred value for each attribute, 
and there is a binary tree whose vertices are labeled
with attributes that guides the voters' decision-making process
(we present formal definitions in Section~\ref{sec:prelim}).

The problem of partitioning the input profile into $k$ 
single-peaked profiles has been considered
by \citeauthor{ELP13}~[\citeyear{ELP13}], who obtained NP-hardness results for every $k\ge 3$, but left the case $k=2$ open. This open
question was highlighted by \citeauthor{JPE18}~[\citeyear{JPE18}] (who also
study the analogue of this problem for single-crossing preferences, and obtain a polynomial-time algorithm for $k=2$) and subsequently resolved by 
\citeauthor{YY20}~[\citeyear{YY20}], who showed the for $k=2$
this problem admits a polynomial-time algorithm.
However, its variant for group-separable
preferences (where we seek $k$ binary decision trees that `explain' the input
profile) has not been considered before.

\paragraph{Our Contribution}
The proof by \citeauthor{YY20}~[\citeyear{YY20}] is based on the 
characterization of the single-peaked preferences in terms of forbidden
minors: as shown by \citeauthor{BH11}~[\citeyear{BH11}], there is a small set of constant-size preference profiles
such that a profile is single-peaked if and only if it does not contain
a subprofile that is isomorphic to one of the profiles in this set.
We observe that this approach extends to 
several other domains that admit a forbidden minor characterization, 
including, in particular, group-separable preferences. 
Further, we consider a natural subdomain of group-separable preferences, namely, the 
caterpillar group-separable domain: it consists of profiles for which the underlying binary tree is caterpillar-shaped, i.e., each binary decision pitches a single candidate against all other candidates.
We provide a characterization of this domain in terms of forbidden minors, thereby showing that our algorithm applies to this domain as well. To complement these results, we show that
the partitioning problem is NP-hard for group-separable preferences
(and several other related domains) for every value of $k\ge 3$.

There are two reasons why we think that our results are interesting.
First, a binary decision tree for a group-separable profile helps us 
understand the structure of the alternative space; this is still the case
where the profile is `explained' by two trees. Second, 
from a more practical perspective, there are voting 
problems that are computationally hard for general preferences, 
but admit polynomial-time algorithms for structured preferences;
a prominent example is the algorithm for the Chamberlin--Courant rule
for single-peaked preferences \cite{BSU13}, which relies on knowing the axis. While we do not yet know if similar results can be obtained
for profiles that can be partitioned into a small number of structured
profiles (see the work of \citeauthor{MSV17}~[\citeyear{MSV17}] for some contributions in this spirit), our results provide a promising starting point and a necessary ingredient for such algorithms.

\paragraph{Related Work}
Our work belongs to a stream of research on the complexity of identifying 
nearly structured profiles (i.e., profiles that can be made single-peaked/single-crossing/group-separable/etc.~by small modifications),
which was initiated by \citeauthor{BCW16}~[\citeyear{BCW16}] and \citeauthor{ELP13}~[\citeyear{ELP13}];
see also the work of \citeauthor{JPE18}~[\citeyear{JPE18}] and
\citeauthor{LPE19}~[\citeyear{LPE19}], and the survey by
\citeauthor{ELP-survey}~[\citeyear{ELP-survey}]. 

Several structured domains can be characterized by a small set
of forbidden minors: this is the case for single-peaked and group-separable preferences \cite{BH11} and for single-crossing preferences \cite{BCW13}. In addition some domains are directly defined in terms of forbidden minors (i.e., the best/medium/worst/value-restricted domains \cite{Sen}). Our minor-based approach is conceptually similar to the work of \citeauthor{EL14}~[\citeyear{EL14}], who provide approximation algorithms for voter/candidate deletion towards structured preferences, for all domains that can be characterized by constant-size forbidden minors.

\citeauthor{karpov}~[\citeyear{karpov}]
and \citeauthor{FKO21}~[\citeyear{FKO21}] initiated the algorithmic analysis of group-separable preferences. In particular, to the best of our knowledge, \citeauthor{FKO21}~[\citeyear{FKO21}] are the first to discuss the domain of caterpillar group-separable preferences; however, their primary focus is on the complexity of voting problems for such preferences rather than on the structure of this domain per se. 

\section{Preliminaries}\label{sec:prelim}
Let $C$ be a finite set of {\em candidates}.
A {\em vote} over $C$ is a linear order over $C$. Given a vote $v$ over $C$
and two candidates $a, b\in C$, we write $a\succ_v b$ to denote that $a$
is ranked above $b$ in $v$. We extend this notation to sets: $A\succ_v B$
means that $v$ ranks all elements of $A$ above all elements of $B$.
A {\em preference profile} $P$ over a candidate set $C$ is a list of 
votes over $C$.

Given a vote $v$ over $C$ and a subset of candidates $A\subseteq C$, 
the {\em restriction of $v$ to $A$} is the vote $v|_A$ over $A$
such that for all $a, b\in A$ it holds that $v|_A$ ranks $a$ above $b$
if and only if $v$ ranks $a$ above $b$. 
Given a profile $P=(v_1, \dots, v_n)$ over $C$ and a subset of candidates $A\subseteq C$, the 
{\em restriction of $P$ to $A$} is
the profile $P|_A=(u_1, \dots, u_n)$ such that $u_i=v_i|_A$ for each $i\in [n]$.
A profile $P'$ over $A$ is a {\em subprofile} of a profile $P$ over $C$ if $A\subseteq C$ and $P'$ is obtained by removing zero or more votes from $P|_A$.


We will consider several special classes of preferences.

\begin{definition}[Single-peaked preferences]
Let $\lhd$ be a linear order over a candidate set $C$.
A vote $v$ over $C$ is {\em single-peaked on $\lhd$} 
if for every triple $a, b, c\in C$ such that $a\lhd b\lhd c$
it holds that $b\succ_v a$ or $b\succ_v c$.
A profile $P$ over $C$ is {\em single-peaked on $\lhd$} if every vote in $P$ is single-peaked on $\lhd$. A profile $P$ is {\em single-peaked} if there exists an  order $\lhd$ over $C$ such that $P$ is single-peaked on $\lhd$;
this order is referred to as the {\em axis}.
\end{definition}
Single-peaked preferences model settings where all candidates can be ordered on a left-to-right axis, and each voter has a favorite point on this axis and ranks candidates on either side of their favorite point in order of increasing 
distance from this point: e.g., if the vote is over tax rates, a voter whose most preferred tax rate is 22\% prefers 20\% to 15\% and 27\% to 40\% (but may prefer 27\% over 20\%). 
\begin{definition}[Group-separable preferences]
A profile $P$ over a candidate set $C$ is {\em group-separable} if every set of candidates $A \subseteq C$ has a proper subset $B \subset A$ such that for every vote $v \in P$ we have either $B \succ_v (A\setminus B)$ or $(A\setminus B) \succ_v B$.
\end{definition}
Equivalently, group-separable profiles can be defined in terms of binary decision trees as follows.
An {\em ordered binary tree} is a rooted tree such that each internal node has two children, 
one of these children is designated as the left child, 
and the other is designated as the right child.
Given an ordered binary tree $T$ whose leaves are labeled with elements of $C$,
we say that a vote $v$ over $C$ is {\em $T$-consistent} if for each internal node $x$
of $T$ it holds that either $v$ ranks all candidates
in the left subtree of $x$ over all candidates in the right subtree of $x$,
or vice versa. We say that a profile $P$ is {\em $T$-consistent} if every vote in $P$ is $T$-consistent. The following proposition is implicit in prior work (see, e.g. \cite{karpov}); for completeness, 
we provide a proof in Appendix~\ref{app:prelim}.

\begin{proposition}\label{prop:gs-tree}
 A profile $P$ is group-separable if and only if it is $T$-consistent for some ordered binary tree $T$.
\end{proposition}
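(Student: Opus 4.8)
The plan is to prove both directions of the equivalence, relating the recursive bipartition structure in the definition of group-separability to the subtree structure of an ordered binary tree. The two directions correspond naturally to a structural induction on the candidate set $C$.

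First I would prove the easier direction: if $P$ is $T$-consistent for some ordered binary tree $T$, then $P$ is group-separable. Given any set $A \subseteq C$ with $|A| \geq 2$, I would locate within $T$ the least common ancestor $x$ of the leaves labeled by $A$; this node $x$ is internal, so it has a left subtree $T_\ell$ and a right subtree $T_r$. I would set $B = A \cap \mathrm{leaves}(T_\ell)$ and argue that $B$ is a proper nonempty subset of $A$ (nonempty and proper precisely because $x$ is the \emph{least} common ancestor, so $A$ meets both subtrees). Then for any vote $v \in P$, $T$-consistency at node $x$ forces either $\mathrm{leaves}(T_\ell) \succ_v \mathrm{leaves}(T_r)$ or the reverse; restricting to $A$ yields $B \succ_v (A \setminus B)$ or $(A \setminus B) \succ_v B$, as required.

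The harder direction is the converse: from group-separability I must build a single tree $T$ that works simultaneously for \emph{all} votes. I would proceed by induction on $|C|$, with the base case $|C| = 1$ being a single leaf. For the inductive step, I apply the definition of group-separability to the full candidate set $A = C$ to obtain a proper nonempty subset $B \subset C$ such that every vote ranks $B$ and $C \setminus B$ on opposite sides (each vote either puts all of $B$ above all of $C\setminus B$, or vice versa). I would then observe that the restrictions $P|_B$ and $P|_{C \setminus B}$ are themselves group-separable (since the defining bipartition property for any subset $A' \subseteq B$ is inherited directly from the property for $A'$ in the original profile), apply the induction hypothesis to obtain ordered binary trees $T_B$ and $T_{C\setminus B}$, and then join them under a fresh root whose left subtree is $T_B$ and right subtree is $T_{C\setminus B}$. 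Consistency of every vote with this combined tree follows because consistency at the new root is exactly the bipartition property we extracted from $B$, while consistency at all internal nodes of $T_B$ and $T_{C\setminus B}$ holds by the inductive construction.

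The main obstacle I anticipate is the converse direction, and specifically the verification that the decomposition is genuinely recursive: I must confirm that the bipartition guaranteed at the top level is compatible with the bipartitions needed deeper down, i.e.\ that restricting a group-separable profile to one block of the partition yields a group-separable profile on that block. This reduces to checking that the quantifier ``for every set $A \subseteq C$'' in the definition, applied to subsets contained in $B$ (resp.\ $C \setminus B$), gives exactly the witness needed for group-separability of the restriction; since $B \subseteq C$, any $A' \subseteq B$ is also a subset of $C$, so the witness $B' \subset A'$ from the original profile carries over verbatim. Once this compatibility is established, the induction closes cleanly and the tree is well-defined.
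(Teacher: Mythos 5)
Your proposal is correct and follows essentially the same route as the paper's proof: for the forward direction you use the least common ancestor of $A$ (the paper's ``maximum-depth node $x$ with $A\subseteq A(x)$'' is the same node), and for the converse you perform the same induction on $|C|$, splitting at the top-level bipartition and joining the two recursively built trees under a fresh root. Your explicit check that restrictions of a group-separable profile remain group-separable is a detail the paper leaves implicit, but it does not change the argument.
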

 
One can then think of internal nodes of $T$ as binary attributes:
all candidates in the left subtree of $x$ possess the attribute associated with $x$,
while all candidates in the right subtree of $x$ do not possess it. Each voter views each attribute as desirable or undesirable and forms their ranking accordingly, 
by starting at the root of the tree and moving downwards. Thus, the tree $T$ in the definition of group-separable preferences plays a similar role to the axis $\lhd$ in the definition of single-peaked preferences: they both `explain' the rationale behind the voters' decision-making.

A {\em restricted domain} is a collection of preference profiles; e.g., we will speak of the domain of all single-peaked profiles (denoted by {\sc SP}) and the domain of all group-separable profiles (denoted by {\sc GS}).
A restricted domain $X$ is {\em hereditary} if for every profile $P\in X$ it holds that every subprofile of $P$ is in $X$. It is immediate 
from the definitions 
that both SP and GS are hereditary.

Two profiles are said to be {\em isomorphic} if they can be obtained from each other by renaming candidates and/or reordering votes. A {\em $p\times q$ minor} is a profile that contains $p$ votes and $q$ candidates. We say that a profile $P$ {\em contains a minor $Q$} if there is a subprofile of $P$ that is isomorphic to $Q$. 
A restricted domain $X$ can be {\em characterized by a set of forbidden minors} if there is a set of minors $\mathcal Q$ such that a profile $P$ belongs to $X$ if and only if it does not contain a minor in $\mathcal Q$; we refer to the set $\mathcal Q$ as {\em the set of forbidden minors for $X$}.
\begin{proposition}\label{LL}\cite{LL17}
A restricted domain can be characterized by a (possibly infinite) set of forbidden minors if and only if it is hereditary.
\end{proposition}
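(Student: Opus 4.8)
The plan is to prove the two implications separately, using as the candidate set of forbidden minors the collection $\mathcal{Q}$ of all profiles that do \emph{not} belong to $X$, taken up to isomorphism. Throughout I treat restricted domains as closed under isomorphism, consistent with the fact that minor-containment is defined only up to renaming candidates and reordering votes; as noted below, this closure is in fact \emph{forced} by characterizability, so it is harmless to assume it.

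For the forward direction, suppose $X$ is characterized by a set $\mathcal{Q}$ of forbidden minors, and let $P\in X$ with $P'$ a subprofile of $P$; I want to show $P'\in X$. The key auxiliary fact is that the subprofile relation is transitive: if $P''$ is a subprofile of $P'$ and $P'$ is a subprofile of $P$, then $P''$ is a subprofile of $P$. This holds because restriction composes, i.e. $(P|_A)|_B=P|_B$ whenever $B\subseteq A\subseteq C$, and because deleting votes and then restricting candidates commutes with restricting candidates and then deleting the corresponding votes. Granting transitivity, if $P'$ contained some $Q\in\mathcal{Q}$ --- that is, a subprofile of $P'$ isomorphic to $Q$ --- then that subprofile would also be a subprofile of $P$, so $P$ would contain $Q$, contradicting $P\in X$. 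Hence $P'$ contains no minor in $\mathcal{Q}$, so $P'\in X$ and $X$ is hereditary. (The same invariance shows characterizability forces closure under isomorphism, since minor-containment is isomorphism-invariant.)

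For the backward direction, suppose $X$ is hereditary and let $\mathcal{Q}$ be the set of all profiles not in $X$, one representative per isomorphism class. I claim $P\in X$ if and only if $P$ contains no minor in $\mathcal{Q}$. If $P\notin X$, then taking $A=C$ and deleting no votes exhibits $P$ as a subprofile of itself isomorphic to the representative of its own class in $\mathcal{Q}$, so $P$ contains a forbidden minor; this is the contrapositive of one inclusion. Conversely, if $P\in X$ but $P$ contained some $Q\in\mathcal{Q}$, then $P$ would have a subprofile $P'$ isomorphic to $Q$; by heredity $P'\in X$, and by closure under isomorphism $Q\in X$, contradicting $Q\in\mathcal{Q}$. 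Thus $X$ is exactly the set of profiles avoiding $\mathcal{Q}$, as required.

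The main obstacle is not any single hard step but the bookkeeping in two places: verifying transitivity of the subprofile relation, where the interaction between candidate-restriction and vote-deletion must be checked carefully (though it is routine), and pinning down the role of isomorphism-closure. The latter is genuinely necessary: without it the statement fails, since the family of strict subprofiles of a single fixed profile is hereditary yet need not be isomorphism-closed, and hence cannot be characterized by forbidden minors. Recording that closure under isomorphism is built into the framework is therefore the one substantive observation underlying the proof.
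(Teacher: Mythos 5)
The paper does not prove this proposition---it is imported verbatim from \cite{LL17}---so there is no in-paper argument to compare against; your proof is the standard one and it is correct. Both directions are handled the right way: transitivity of the subprofile relation gives heredity from characterizability, and taking $\mathcal Q$ to be (representatives of) the complement of $X$ gives the converse. Your observation that the backward direction silently requires $X$ to be closed under isomorphism---and that this closure is forced anyway on any domain characterizable by forbidden minors, since minor containment is isomorphism-invariant---is the one genuinely delicate point, and you identify and resolve it correctly.
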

There are well-studied restricted domains that are explicitly defined in terms of forbidden minors.
\begin{definition}
For $j=1, 2, 3$, we say that a 3-by-3 minor $Q$ is a {\em $j$-minor} if in $Q$ each 
candidate appears in the $j$-th position exactly once (note that for each $j=1, 2, 3$ there are several $j$-minors that are not pairwise isomorphic).
A profile $P$ is \textit{best-/medium-/worst-restricted} if it does not contain any $1$-(respectively, $2$-, $3$-)minors. 
We say that a profile is {\em value-restricted} if it is simultaneously
best-, medium-, and worst-restricted. We will denote the restricted domains
that consist of best-/medium-/worst-/value-restricted profiles by, respectively, 
{\sc BR}, {\sc MR}, {\sc WR}, and {\sc VR}.
\end{definition}
By definition, each of the domains BR, MR, WR and VR can be characterized by a finite set of forbidden minors. 
It has been shown that the domains SP and GS admit such characterizations, too;
we give them below, as they are vital both for our hardness and 
for our easiness results.
\begin{theorem}\cite{BH11}\label{characsp}
A profile $P$ is in {\sc SP} if and only if 
it is worst-restricted 
and does not contain any of the four $2\times 4$ minors given by
$$
\{a,d\} \succ_u b \succ_u c, \qquad \{c,d\} \succ_v b \succ_v a.
$$
\end{theorem}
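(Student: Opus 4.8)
The plan is to prove the two implications separately, treating the forward (necessity) direction as a warm-up and the converse (sufficiency) as the real work.

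For necessity, suppose $P$ is single-peaked on an axis $\lhd$. First I would verify worst-restriction: for any triple $x \lhd y \lhd z$, the single-peakedness condition gives $y \succ_w x$ or $y \succ_w z$ for every vote $w$, so the middle candidate $y$ is never ranked last among $\{x,y,z\}$; hence no triple can be a $3$-minor and $P\in${\sc WR}. To exclude the four $2\times4$ minors I would argue by contradiction. Restricting $u,v$ to $\{a,b,c\}$ gives $a\succ_u b\succ_u c$ and $c\succ_v b\succ_v a$, so both $a$ and $c$ appear last in some vote while $b$ never does; since the axis-middle of a single-peaked triple is never ranked last and $b$ is the unique candidate that is never last here, $b$ must occupy the middle, say $a\lhd b\lhd c$. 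It then remains to place $d$, and I would check the four regions (left of $a$, between $a$ and $b$, between $b$ and $c$, right of $c$): using $d\succ_u b$, $d\succ_u c$, $d\succ_v b$, $d\succ_v a$, in each case one of the triples $\{a,b,d\}$, $\{b,d,c\}$, $\{b,c,d\}$ has its axis-middle ranked last by $u$ or by $v$, contradicting single-peakedness. This case analysis is short precisely because $d$ lies above $b$ in both votes.

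The substantial direction is sufficiency, and here I would build the axis from the local betweenness information that worst-restriction supplies. For each triple $\{x,y,z\}$, worst-restriction guarantees that the set of candidates appearing last within the restriction to this triple has size one or two (size three is exactly a $3$-minor); when it has size two—say $x$ and $z$ both appear last in some votes while $y$ never does—single-peakedness forces $y$ strictly between $x$ and $z$, so I would record the constraint $\langle x,y,z\rangle$. The goal is to show that this system of forced betweenness constraints is realizable by a single linear order $\lhd$, and that every vote of $P$ is single-peaked on any such $\lhd$.

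The main obstacle is exactly this realizability step: worst-restriction controls individual triples only, and a collection of triplewise betweenness constraints can be globally contradictory even when each triple is consistent in isolation. This is precisely where the four $2\times4$ minors enter—each forbidden configuration is a minimal four-candidate witness of such a global inconsistency, and, crucially, each involves only two votes and hence contains no $3$-minor, so worst-restriction alone cannot detect it. I would therefore show that the absence of these minors rules out every obstruction to linear realizability of the betweenness system, then handle the ambiguous size-one triples by checking that they impose no additional constraint and become single-peaked automatically once $\lhd$ respects all forced constraints, and finally verify directly that the resulting $\lhd$ makes every vote single-peaked. An alternative route I would keep in reserve is induction on $|C|$: since {\sc WR} and minor-freeness are hereditary, $P$ restricted to $C\setminus\{c\}$ is single-peaked, reducing the problem to inserting $c$ into the recovered axis; the delicate point there, and the likely source of difficulty, is that the sub-axis on $C\setminus\{c\}$ need not be unique, so one must choose the sub-axis and the insertion position coherently.
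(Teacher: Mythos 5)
First, note that the paper does not prove this statement at all: it is imported verbatim from Ballester and Haeringer [2011] as Theorem~\ref{characsp} and used as a black box, so there is no in-paper proof to compare your argument against. Judged on its own terms, your necessity direction is correct and essentially complete: the observation that the axis-middle of any triple is never ranked last disposes of worst-restriction, and your four-region case analysis for the placement of $d$ (checking the triples $\{a,b,d\}$, $\{b,d,c\}$, $\{b,c,d\}$ against the inequalities $d\succ_u b$, $d\succ_u c$, $d\succ_v b$, $d\succ_v a$) does rule out all four $2\times 4$ configurations; I verified each case and it goes through.

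The sufficiency direction, however, is a plan rather than a proof, and the gap sits exactly where you locate ``the main obstacle.'' You reduce the problem to showing that the system of forced betweenness constraints $\langle x,y,z\rangle$ is realizable by a linear order, and then assert that ``the absence of these minors rules out every obstruction to linear realizability.'' That assertion \emph{is} the theorem: betweenness realizability is NP-hard in general (Opatrn\'y), so there is no generic reason a triplewise-consistent system extends to a linear order, and one must exhibit a concrete argument showing that any minimal unrealizable betweenness system arising from a worst-restricted profile forces one of the four $2\times 4$ minors (or, equivalently, give an explicit axis construction and verify it). You give no such argument, nor do you carry out the final verification that every vote is single-peaked on the constructed axis once the ambiguous size-one triples are handled. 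Your reserve plan (induction on $|C|$, reinserting a deleted candidate) names the right difficulty---non-uniqueness of the sub-axis---but likewise does not resolve it. As it stands, the hard half of the equivalence is unproven.
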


\begin{theorem}\cite{BH11}\label{thm:gs-minor}
A profile $P$ is in {\sc GS} if and only if 
it is medium-restricted and does not contain the $2\times 4$ minor given by
$$
a \succ_u b \succ_u c \succ_u d, \qquad b \succ_v d \succ_v a \succ_v c.
$$
\end{theorem}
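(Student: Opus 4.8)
The plan is to prove both implications through the tree characterization of Proposition~\ref{prop:gs-tree}: a profile is group-separable exactly when it is $T$-consistent for some ordered binary tree $T$, and the root of such a tree splits the candidate set $C$ into two nonempty parts $B$ and $C\setminus B$ such that every vote ranks one of them entirely above the other. Call such a partition a \emph{consistent split}. Both forbidden conditions (being medium-restricted and avoiding the $2\times 4$ minor) are hereditary, as is {\sc GS}, so the two directions can be treated separately.

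For necessity, since {\sc GS} is hereditary it suffices to check that no $2$-minor and no copy of the displayed $2\times 4$ minor is itself group-separable; a group-separable profile then cannot contain either as a subprofile. For a $2$-minor on $\{a,b,c\}$, the root of any binary tree separates one candidate from the other two, so that candidate is the top or the bottom of every vote and is therefore never in the middle; but in a $2$-minor every candidate is the middle of some vote, a contradiction. For the minor $a\succ_u b\succ_u c\succ_u d$, $b\succ_v d\succ_v a\succ_v c$, I would simply list the consistent splits of each vote: as unordered partitions of $\{a,b,c,d\}$, vote $u$ admits $\{a\}\mid\{b,c,d\}$, $\{a,b\}\mid\{c,d\}$, $\{a,b,c\}\mid\{d\}$, while vote $v$ admits $\{b\}\mid\{a,c,d\}$, $\{b,d\}\mid\{a,c\}$, $\{a,b,d\}\mid\{c\}$; these two lists are disjoint, so the two votes share no consistent split and no root can be chosen. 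Hence neither configuration is group-separable, and every group-separable profile is medium-restricted and free of the $2\times 4$ minor.

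For sufficiency I would argue by induction on $|C|$, the cases $|C|\le 2$ being immediate. For the inductive step, because the class ``medium-restricted and free of the $2\times 4$ minor'' is hereditary, it is enough to produce a single consistent split of $C$: one can then recurse on $P|_B$ and $P|_{C\setminus B}$ and join the resulting subtrees under a new root. The role of the two hypotheses becomes clear here. Medium-restriction is exactly the statement that \emph{every triple} of candidates admits a consistent split, since the candidate of the triple that is never in the middle can be separated from the other two, and conversely three votes realizing all three middles form a $2$-minor. The $2\times 4$ condition is precisely the obstruction that prevents \emph{two} votes from sharing a common top/bottom block on four candidates. The task is to promote these local guarantees to one split valid for all votes at once.

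The heart of the proof---and the step I expect to be the main obstacle---is this split-existence lemma. I would fix one vote $v_1=c_1\succ\dots\succ c_m$ and observe that any consistent split must cut $v_1$ between two consecutive candidates, so the only candidates for $B$ are the prefixes $\{c_1,\dots,c_i\}$. Assuming for contradiction that none of these is a block common to all votes, I would track a minimal broken prefix together with a vote witnessing the break, and combine this witness with the triple-level splits supplied by medium-restriction; the resulting interleaving pattern should reproduce, on some four candidates and two votes, exactly the forbidden $2\times 4$ minor (or, on three candidates and three votes, a $2$-minor). This is essentially a Helly-type consistency statement for the systems of cut partitions induced by the votes: pairwise compatibility of these cut systems is guaranteed by $2\times 4$-freeness and triple-wise local consistency by medium-restriction, and the real work is to show that together they force a globally common cut. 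Carrying out the case analysis in this extraction cleanly, so that a forbidden subprofile is always exhibited whenever a common cut is missing, is the delicate part of the argument.
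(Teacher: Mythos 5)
First, note that the paper does not prove this statement: Theorem~\ref{thm:gs-minor} is imported verbatim from Ballester and Haeringer [2011] (\cite{BH11}) and used as a black box, so there is no in-paper proof to compare yours against. Judged on its own terms, your necessity direction is complete and correct: since {\sc GS} is hereditary, it suffices to check that a $2$-minor and the displayed $2\times 4$ profile are each non-group-separable, and your arguments (the root of a $3$-leaf tree isolates a candidate that is then never in the middle position; the two votes of the $2\times 4$ minor admit disjoint families of consistent splits) are both valid. Your reduction of sufficiency to a single split-existence lemma via induction and heredity is also the right architecture, and your observations that medium-restriction is equivalent to every triple admitting a consistent split, and that $2\times 4$-minor-freeness is equivalent to every pair of votes admitting a common split on every four candidates, are accurate.

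The genuine gap is the split-existence lemma itself, which is where the entire difficulty of the theorem lives and which you explicitly leave as a plan rather than a proof. Saying that a ``minimal broken prefix together with a vote witnessing the break'' should, when combined with the triple-level splits, ``reproduce'' a forbidden configuration is a research direction, not an argument: you have not specified which four candidates and which two (or three) votes are extracted, and the obvious obstacles are not addressed. In particular, the cut positions of $v_1$ compatible with another vote $v$ need not form an interval, different votes may break different prefixes in opposite directions (one ranking the prefix too low, another interleaving it), and the candidates witnessing a broken prefix need not align with the candidates of any single triple, so it is not evident that the failure localizes to a $2$-vote, $4$-candidate pattern rather than requiring many votes simultaneously. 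A Helly-type statement of the kind you invoke (pairwise plus triple-wise local consistency implies a global common cut) is exactly the theorem's content and is true, but it does not follow from general principles; it needs the explicit case analysis you defer. Until that extraction is carried out, the sufficiency direction is unproven.
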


For each restricted domain $X$ and a positive integer $k$, we define the following decision problem.
\begin{tcolorbox}
\textsc{$X$ Voter $k$-Partition}:\\
\textbf{Input}: A profile $P$.\\
\textbf{Question}: Can $P$ be partitioned into $k$ profiles $P_1,\ldots, P_k$
so that $P_i$ is in $X$ for each $i\in [k]$? 
\end{tcolorbox}

\section{Partitioning Voters into Two Groups}\label{sec:k=2}

The main result of this section is the following theorem.
\begin{theorem}\label{thm:2axesinP}
Let $X$ be a restricted domain such that for some constant $\ell \in \mathbb N$,
$X$ can be characterized by a finite set $\mathcal Q$ of forbidden minors, 
where each minor in $\mathcal Q$ is either a $2\times \ell$ minor 
or a $j$-minor for some $j=1, 2, 3$.
Then \textsc{$X$ Voter $2$-Partition} admits a polynomial-time algorithm.
\end{theorem}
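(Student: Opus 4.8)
The plan is to reduce \textsc{$X$ Voter $2$-Partition} to a constraint-satisfaction problem on the votes and solve it in polynomial time. Assign to each vote $v$ of the input profile $P=(v_1,\dots,v_n)$ a Boolean variable $x_v$, where $x_v=1$ means ``place $v$ in $P_1$'' and $x_v=0$ means ``place $v$ in $P_2$''. By Proposition~\ref{LL} the domain $X$ is hereditary, so $P_i\in X$ if and only if $P_i$ contains no minor from $\mathcal Q$; hence a $0/1$ assignment yields a valid partition precisely when, for every occurrence of a forbidden minor $Q\in\mathcal Q$ inside $P$, the votes realizing that occurrence are \emph{not} all assigned the same value. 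The first step is therefore to enumerate all such occurrences. Since every minor in $\mathcal Q$ has at most $3$ votes and at most $\max(\ell,3)=O(1)$ candidates, an occurrence is specified by a choice of at most $3$ votes and at most $\max(\ell,3)$ candidates; there are only $O(n^3 m^{\max(\ell,3)})$ such choices, and each can be tested for isomorphism to a member of $\mathcal Q$ in constant time. Thus all occurrences can be listed in polynomial time, and each one imposes a ``not-all-equal'' constraint on either two or three of the variables $x_v$.

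First I would dispose of the $2\times\ell$ minors. An occurrence of such a minor involves exactly two votes $u,v$, and the associated constraint is simply ``$x_u\ne x_v$'', forcing $u$ and $v$ into different parts. This is the pair of $2$-clauses $(x_u\vee x_v)\wedge(\lnot x_u\vee\lnot x_v)$; collecting all of them yields an edge set on the votes, and these constraints fit directly into a $2$-SAT instance (on their own they are satisfiable exactly when the resulting graph is bipartite).

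The main obstacle is the $j$-minor constraints. An occurrence of a $j$-minor involves three votes $u,v,w$ and forbids them from being monochromatic, i.e.\ it requires $x_u,x_v,x_w$ to be not-all-equal. Encoded naively this is a pair of $3$-clauses $(x_u\vee x_v\vee x_w)\wedge(\lnot x_u\vee\lnot x_v\vee\lnot x_w)$, and an arbitrary family of such constraints is exactly a \textsc{Not-All-Equal 3-SAT} (hypergraph $2$-colouring) instance, which is \textup{NP}-hard in general; moreover, for a domain defined purely by $j$-minors, such as {\sc VR}, there are no pairwise constraints at all, so the triples cannot simply be bypassed. The crux of the proof is therefore a structural lemma showing that, for the \emph{specific} hypergraph of forbidden triples produced by $j$-minors, the not-all-equal constraints are expressible in $2$-SAT. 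Here I would exploit the defining feature of a $j$-minor: the three votes are distinguished by which candidate of the witnessing triple occupies the fixed position $j$ --- a ``rainbow at position $j$''. Concretely, I would group the votes according to the candidate they place in position $j$ on each candidate triple and then argue, using this rainbow structure together with the pairwise graph, that avoiding a monochromatic $j$-minor reduces to implications among a polynomial number of (possibly auxiliary) literals, so that the entire system becomes a $2$-SAT instance equivalent to the original partition problem. This is exactly the step that Yang's technique accomplishes for single-peaked preferences, and the one I expect to require the most care to carry over to a general domain $X$ satisfying the hypotheses.

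Finally, I would merge the $2$-clauses from the $2\times\ell$ minors with the clauses produced for the triples into a single $2$-SAT instance and solve it in linear time via the standard implication-graph algorithm. It then remains to verify the equivalence in both directions: every valid $2$-partition induces a satisfying assignment (read off from which part each vote lies in), and conversely every satisfying assignment yields a partition in which no forbidden minor is monochromatic, so that both $P_1$ and $P_2$ avoid all minors of $\mathcal Q$ and hence, by heredity, belong to $X$. Since enumeration, clause generation and $2$-SAT solving are all polynomial, this establishes the theorem.
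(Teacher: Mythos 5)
Your setup is the right one: reduce to ``no occurrence of a forbidden minor may be monochromatic,'' observe that the $2\times\ell$ minors give binary disequality constraints, and correctly diagnose that the ternary not-all-equal constraints coming from $j$-minors are the obstacle (NAE-3-SAT being NP-hard in general). You even point at the correct structural handle, namely the partition of the votes into three classes $P^i_{T,x}$ according to which candidate of a triple $T=\{a,b,c\}$ they place in position $i$. But the proposal stops exactly where the proof has to begin: the entire content of the theorem is the argument that this ``rainbow'' structure collapses the ternary constraints into a polynomially solvable system, and you explicitly defer that step (``I would argue\dots''; ``the one I expect to require the most care''). As written, nothing is proved.

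Concretely, the ingredients you are missing are: (i) the key lemma that for any valid partition $P=U\cup V$ and any $i$-dangerous triple $T$, at most one of the three classes $P^i_{T,x}$ can meet both $U$ and $V$ (if two classes are split, whichever part meets the third class contains a forbidden $i$-minor); (ii) a preprocessing step rejecting instances in which two of the classes $P^i_{T,x}$ are themselves not in $X$; and (iii) a case split driven by (i). When for every dangerous triple exactly one class is not in $X$, that class must be split across both parts, so the other two classes are pure and must be separated; this yields forced disequality edges and the problem reduces to a bipartiteness check. Otherwise there is a dangerous triple with all three classes in $X$; one then guesses (three ways) which two of its classes are pure and on opposite sides, which fixes every vote outside $P^i_{T,c}$, and only then does every remaining forbidden-minor occurrence involve at most two free variables (no $j$-minor can lie wholly inside one class, since each class is in $X$), giving three genuine 2-SAT instances of which at least one must be satisfiable iff the answer is yes. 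A single global 2-SAT instance over all vote variables, as you describe, cannot capture the ternary constraints without this anchoring and branching, so the gap is not a technicality but the heart of the proof.
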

Note that the condition in the statement of Theorem~\ref{thm:2axesinP}
is satisfied by all restricted domains defined in Section~\ref{sec:prelim}.
Hence, we obtain the following corollary.
\begin{corollary}\label{cor:2axesinP}
The problem \textsc{$X$ Voter $2$-Partition} is in {\textrm P}
for $X\in\{$GS, BR, MR, WR, VR$\}$.
\end{corollary}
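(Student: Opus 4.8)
The plan is to obtain the corollary as a direct instantiation of Theorem~\ref{thm:2axesinP}. Concretely, it suffices to verify, for each domain $X\in\{\mathrm{GS},\mathrm{BR},\mathrm{MR},\mathrm{WR},\mathrm{VR}\}$, that $X$ admits a \emph{finite} forbidden-minor characterization in which every forbidden minor is either a $2\times\ell$ minor for one fixed constant $\ell$, or a $j$-minor for some $j\in\{1,2,3\}$. Once this structural check succeeds for a given $X$, Theorem~\ref{thm:2axesinP} applies verbatim and places \textsc{$X$ Voter $2$-Partition} in P; no further algorithmic argument is needed, since all the computational content has already been absorbed into the theorem.

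First I would dispatch the value-restriction domains, which are essentially immediate from the definitions in Section~\ref{sec:prelim}. By definition, $\mathrm{BR}$, $\mathrm{MR}$, and $\mathrm{WR}$ are the domains obtained by forbidding, respectively, all $1$-, all $2$-, and all $3$-minors, and $\mathrm{VR}$ is obtained by forbidding all of these simultaneously. Each such forbidden minor is a $j$-minor by construction, and since there are only finitely many $3\times 3$ minors up to isomorphism, each of these forbidden-minor sets is finite. In these four cases no $2\times\ell$ minor occurs at all, so the hypothesis of Theorem~\ref{thm:2axesinP} is met with $\ell$ taken to be any fixed constant.

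The only domain that requires a moment's care is $\mathrm{GS}$, and this is where I would focus attention. By Theorem~\ref{thm:gs-minor}, $\mathrm{GS}$ is characterized by medium-restriction together with a single $2\times 4$ minor. Medium-restriction contributes exactly the finitely many $2$-minors, which are $j$-minors with $j=2$, while the remaining obstruction is one $2\times 4$ minor. Thus the full forbidden-minor set for $\mathrm{GS}$ is finite and decomposes cleanly into $j$-minors ($j=2$) together with a single $2\times\ell$ minor with $\ell=4$, exactly matching the template of Theorem~\ref{thm:2axesinP}. Having confirmed the hypothesis for all five domains, the corollary follows. I do not expect any genuine obstacle here: the value-restriction domains are immediate, and the only point to get right is the bookkeeping for $\mathrm{GS}$, namely recognizing that its characterization splits into medium-restriction plus one bounded-width minor so that it fits the hypothesis of the theorem.
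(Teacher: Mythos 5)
Your proposal is correct and matches the paper's own argument: the corollary is obtained by checking that each of the five domains satisfies the hypothesis of Theorem~\ref{thm:2axesinP}, with BR, MR, WR, VR characterized by (finitely many) $j$-minors by definition, and GS by Theorem~\ref{thm:gs-minor} via $2$-minors plus a single $2\times 4$ minor. No difference in approach.
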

For the SP domain, an analogue of Corollary~\ref{cor:2axesinP} has been established by \citeauthor{YY20}~[\citeyear{YY20}]; our techniques are very similar to his.
\footnote{Our results were obtained independently from \citeauthor{YY20}~[\citeyear{YY20}], as we were not aware of his work.}
\begin{proof}[Proof sketch]
Consider a domain $X$ that satisfies the conditions in the theorem statement, and let $\mathcal Q$ be the respective set of forbidden minors. Fix a profile $P$.
Note that if $P$ contains multiple copies of a vote $v$,
we can remove all but one copy without changing the answer.
Therefore, we can assume that $P$ does not contain two identical
votes. Also, the order of votes in $P$ does not matter.
Thus, from now on we will treat $P$ as a set of votes.
Our goal, then, is to decide whether we can partition $P$ as 
$P=U \cup V$ so that $U, V\in X$.

We first explain how $i$-minors, $i=1, 2, 3$, induce a partition of $P$ into three subsets. Fix a candidate triple $T=\{a, b, c\}\subseteq C$ and $i\in\{1, 2, 3\}$. 
For each $x\in T$, let $P^i_{T,x}$ be the set of votes $v\in P$ such that $x$ appears in the $i$-th position in $v|_T$.
We will say that $T$ is {\em $i$-dangerous} for $P$ if 
$\mathcal Q$ contains an $i$-minor and
$P^i_{T,a},P^i_{T,b}, P^i_{T,c}\neq \varnothing$.
Our analysis relies on the following lemma.

\begin{lemma}\label{cl:onepertriple}
Suppose $\mathcal Q$ contains an $i$-minor, $i\in [3]$. Suppose 
we can partition $P$ as $P=U\cup V$ so that $U, V\in X$. 
Then for every $i$-dangerous triple $T=\{a,b,c\}\subset C$
there is at most one candidate $a\in T$ such that both 
$U\cap P^i_{T,a}\neq \varnothing$ and $V\cap P^i_{T,a}\neq \varnothing$ hold.
\end{lemma}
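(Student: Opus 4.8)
The plan is to argue by contradiction. Suppose, toward a contradiction, that for some $i$-dangerous triple $T=\{a,b,c\}$ there are two distinct candidates whose associated vote-classes are split across the partition; relabelling if necessary, say these are $a$ and $b$, so that $U\cap P^i_{T,a}$, $V\cap P^i_{T,a}$, $U\cap P^i_{T,b}$, and $V\cap P^i_{T,b}$ are all nonempty. The goal is then to exhibit a forbidden $i$-minor lying entirely inside $U$ or entirely inside $V$, contradicting $U,V\in X$.

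First I would fix witnesses: pick $u_a\in U\cap P^i_{T,a}$ and $u_b\in U\cap P^i_{T,b}$ on the $U$-side, and $v_a\in V\cap P^i_{T,a}$, $v_b\in V\cap P^i_{T,b}$ on the $V$-side. These are genuinely distinct votes within each side, since $u_a|_T$ and $u_b|_T$ place different candidates ($a$ versus $b$) in the $i$-th position, and likewise for $v_a,v_b$. Next I would invoke $i$-dangerousness: by definition $P^i_{T,c}\neq\varnothing$, so there is a vote $w$ whose restriction $w|_T$ places $c$ in the $i$-th position, and since $P=U\cup V$ we have $w\in U$ or $w\in V$.

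The crux is a two-case completion. If $w\in U$, then the three votes $u_a,u_b,w$ all lie in $U$, and restricted to $T$ they place, respectively, $a$, $b$, $c$ in the $i$-th position; hence each candidate of $T$ occupies the $i$-th position exactly once, so $\{u_a|_T,u_b|_T,w|_T\}$ is an $i$-minor contained in $U$. Symmetrically, if $w\in V$, the votes $v_a,v_b,w$ yield an $i$-minor contained in $V$. Since the domain $X$ forbids every $i$-minor of the relevant type, this contradicts $U\in X$ (resp.\ $V\in X$), completing the argument.

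I expect the main obstacle to be the last step: being careful that the $3\times 3$ configuration we build is actually a \emph{forbidden} minor. The case split on where $w$ lands is exactly what guarantees that all three witness votes sit on one side of the partition; without it one would only have two same-side witnesses (for $a$ and $b$) and could not close the minor. I would also flag that this step relies on the domain excluding \emph{all} $i$-minors rather than a single distinguished one: if $\mathcal Q$ contained only some $i$-minors, the constructed configuration might happen to be a permissible $i$-minor and no contradiction would follow. Thus the statement is really used for domains (such as BR, MR, WR, VR, and the medium-/worst-restriction components of GS and SP) whose forbidden set contains the $i$-minors as a complete isomorphism class, and I would make this reading explicit in the proof.
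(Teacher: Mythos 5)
Your proof is correct and is essentially the paper's own argument, just spelled out in more detail: the paper likewise picks the side $W\in\{U,V\}$ containing a vote from $P^i_{T,c}$ and closes the $i$-minor there. Your closing caveat --- that the contradiction needs $\mathcal Q$ to contain \emph{all} $i$-minors, not just one --- is a fair observation about the hypothesis ``$\mathcal Q$ contains an $i$-minor,'' and it holds for every domain the paper applies the lemma to.
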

\begin{proof}
Suppose that for some $i$-dangerous triple $T=\{a, b, c\}$ there are two candidates 
$a,b\in T$ such that the sets $P^i_a$ and $P^i_b$ both have non-empty intersection with each of $U$ and $V$. There is a set $W\in\{U, V\}$ 
with $P^i_{T,c}\cap W\neq\varnothing$. But then $W$ contains a forbidden $i$-minor, so it is not in~$X$. 
\end{proof}



We start by checking, for each of the $\binom{m}{3}$ triples 
$T=\{a, b, c\}\subset C$, 
whether for each $i$-minor in $\mathcal Q$ 
each of the sets $P^i_{T,a}$, $P^i_{T,b}$ and $P^i_{T,c}$ 
is in $X$. Note that, as $X$ can be characterized by a finite set of forbidden minors, and the size of each minor in $\mathcal Q$ is bounded by a constant, 
we can check in polynomial time whether a given profile is in $X$.
If for some $i$-dangerous triple $T$, and forbidden $i$-minor in $\mathcal Q$ 
at least two 
of the sets $P^i_{T,x}$, $x\in T$, (say, $P^i_{T,a}$ and $P^i_{T,b}$) 
are not in $X$, then we report that $P$ 
is a no-instance; this decision is correct by Lemma~\ref{cl:onepertriple}. 
Therefore, in what follows, for each forbidden $i$-minor $Q\in \mathcal Q$
over a triple $T=\{a, b, c\}$, we assume that
at most one of the sets $P^i_{T,a}$, $P^i_{T,b}$ and $P^i_{T,c}$ is not 
in $X$.
We now split our analysis into two cases.\\

{\noindent\bf Case 1:} For every $i$-dangerous triple $T\subset A$ and every forbidden $i$-minor of $X$, exactly one  of $P^i_{T,x}$, $x\in T$, is not in $X$.
In this case, we construct a graph $G$ with vertex set $P$ 
so that $G$ is bipartite if and only if $P$ is a yes-instance of our problem.
We define the edge set $E$ of the graph $G$ as follows.
Given two votes $u, v$ such that $(u, v)$ induces a
$2\times \ell$ forbidden minor, we add the edge $\{u, v\}$ to $E$; we refer to edges
of this type as {\em $2\times \ell$-edges}.
For every forbidden $i$-minor of $X$ and  each $i$-dangerous triple $T=\{a, b, c\}$, if $P^i_{T,c}$ is not in $X$,
then for every pair of votes $(u, v)$ with $u\in P^i_{T,a}$, $v \in P^i_{T,b}$
we add the edge $\{u, v\}$ to $E$; we refer to edges of this type as {\em $i$-edges}.
It is not hard to see that $G$ is bipartite if and only if $P$
is a yes-instance of our problem; refer to Appendix \ref{appendixa} for details.\\

{\noindent\bf Case 2:} There exists an $i$-dangerous triple $T=\{a, b, c\}\subset C$ such that for each $x\in T$ the set $P^i_{T,x}$ is in $X$.

In this case we construct three instances of 2-SAT and show that at least one of them is satisfiable if and only if $P$ is a yes-instance of our problem. 
Recall that an instance
of 2-SAT is given by a set of Boolean variables, which take values in $\{T, F\}$ 
and a collection of clauses of the form $x\lor y$, where $x$ and $y$ are (not necessarily distinct) literals
(i.e., variables or negations of variables). 
It is satisfiable if we can assign
values to all variables so that at least one literal in each clause is satisfied
(i.e., takes value $T$).
We can decide in polynomial time if a given instance of 2-SAT is satisfiable
\cite{sipser}.

For each pair of candidates $\{a,b\}\subset T$ we construct a 2-SAT instance $I_{a, b}$ that is satisfiable if and only if $P$ can be partitioned as $V=U\cup V$ so that $U$ and $V$ are in $X$, with
$P^i_{T,a}\subset U$ and $P^i_{T,b}\subset V$. To this end, we use Lemma~\ref{cl:onepertriple}.

For each $v\in P^i_{T,c}$, we create a Boolean variable $x_v$;
we interpret $x_v=T$ as $v\in U$ and $x_v=F$ as $v\in V$. 
We create the following clauses:
\begin{itemize}
    \item  For each $v\in P^i_{T,c}$, if there exists a $u \in P^i_{T,a}$ such that $u, v$ induce a $2\times \ell$ forbidden minor, we add the clause $\neg x_v$. 
    Similarly, if there exists a $u \in P^i_{T,b}$ such that $u, v$ induce a 
    $2\times \ell$ forbidden minor, we add the clause~$x_v$.
    \item For each $v\in P^i_{T,c}$, if there are $u,w \in P^i_{T,a}$ such that $u, v, w$ induce a $j$-minor in $\mathcal Q$ for $j\in[3]$, we add the clause $\neg x_v$; if there are $u,w \in P^i_{T,b}$ such that $u, v, w$ induce a forbidden $j$-minor, we add the clause~$x_v$.
    \item For each pair of votes  $u,v\in P^i_{T,c}$ if there is a vote $w \in P^i_{T,a}$ such that $u, v, w$ induce a $j$-minor in $\mathcal Q$, 
    we add the clause $(\neg x_u \lor \neg x_v)$, and 
    if there is a vote $w \in P^i_{T,b}$ such that $u, v, w$ induce 
    a $j$-minor in $\mathcal Q$, we add the clause $(x_u \lor x_v)$.
\end{itemize}

Suppose that $P$ can be partitioned as $U\cup V$ so that $U, V\in X$.
We know that $T$ is $i$-dangerous 
and there exist $a, b\in T$ such that 
$P^i_{T,a}\subseteq U$, $P^i_{T,b}\subseteq V$ by Lemma~\ref{cl:onepertriple}. 
We claim that 
$I_{a, b}$ is satisfiable. Indeed, for each $v\in P^i_{T,c}$, 
let $x_v=T$ if $v\in U$ and let $x_v=F$ if $v\in V$. Consider a clause
of the form $x_v$. For this clause not to be satisfied, it has to be the case
that $v\in V$. But then $I_{a, b}$ can only contain this clause
if there exists a $u \in P^i_{T,b}\subseteq V$ such that $u, v$ induce
a $2\times \ell$ forbidden minor or
if there exist $u,w \in P^i_{T,b}\subseteq V$ such that 
$u, v, w$ induce a $j$-minor in $\mathcal Q$; 
in either case, we obtain a contradiction with $V\in X$. 
Similarly, for a clause of the form $\neg x_v$ not to be satisfied, 
it has to be the case that $v\in U$ and $U$ contains a forbidden minor.
Further, for a clause $(x_u \lor x_v)$ not to be satisfied, 
it has to be the case that $u, v\in V$ and $V$ contains a forbidden minor,
and for a clause $(\neg x_u \lor \neg x_v)$ not to be satisfied, 
it has to be the case that $u, v\in U$ and $U$ contains a forbidden minor.
Thus, the truth assignment described above satisfies $I_{a, b}$.

Conversely, suppose that there is a pair $\{a, b\}\subseteq T$
such that $I_{a, b}$ is satisfied, and let $(x^*_v)_{v\in P^i_{T,c}}$
be a satisfying assignment for it.
We then construct $U$, $V$
by setting $U=P^i_{T,a} \cup \{v\in P^i_{T,c}: x^*_v=T\}$, 
$V=P^i_{T,b} \cup \{v\in P^i_{T,c}: x^*_v=F\}$. 
We claim that $U$ and $V$ are in $X$. 
Indeed, consider $U$, and suppose that
it contains a $2\times \ell$ forbidden minor involving votes $u$
and $v$. Since $P^i_{T,a}$
and $P^i_{T,b}$ are in $X$, 
we can assume without loss of generality that $u\in P^i_{T,a}$, $v\in P^i_{T,b}$.
But in that case $I_{a, b}$ contains the clause $\neg x_v$, 
so we must have $x^*_v=F$, a contradiction with $v$ being placed
in $U$. Now, suppose that $U$ contains a $j$-minor in $\mathcal Q$, $j\in [3]$,
involving votes $u$, $v$ and $w$. Since $P^i_{T,a}$
and $P^i_{T,b}$ are in $X$, 
we can assume without loss of generality that 
either (1) $u, w\in P^i_{T,a}$, $v\in P^i_{T,c}$ or
(2) $w\in P^i_{T,a}$, $u, v\in P^i_{T,c}$.
But then in case (1) the instance $I_{a, b}$
contains the clause $\neg x_v$, 
so we must have $x^*_v=F$, and 
in case (2) the instance $I_{a, b}$
contains the clause $\neg x_u \lor \neg x_v$, 
so we must have $x^*_u=F$ or $x^*_v=F$.
In either case, we get a contradiction with how $U$
is constructed. 
We conclude that $U$ does not contain forbidden minors 
and therefore it is in $X$;
by the same argument, $S_2$ is in $X$ as well. 
To summarize, our algorithm needs to consider $O(m^3)$ profiles, 
check whether each of them is in $X$, and then
either construct a graph and decide whether it is bipartite or solve
three instance of 2-SAT. Thus, our algorithm runs in polynomial time.
\end{proof}
\section{Partitioning Voters into at Least Three Groups}\label{sec:k>2}
\citeauthor{ELP13}~[\citeyear{ELP13}] show that \textsc{SP Voter $k$-Partition} is NP-complete even when $k\geq 3$. Their reduction  is from \textsc{$k$-Partition Into Cliques}. 
This problem, which is known to be NP-complete \cite{karp72}, 
is defined as follows.
\begin{tcolorbox}
\textsc{$k$-Partition Into Cliques}:\\
\textbf{Input}: A graph $G=(V_G,E_G)$.\\
\textbf{Question}: Can we partition $V_G$ into $k$ sets such that each set of vertices induces a clique on $(V_G,E_G)$.
\end{tcolorbox}

In the following we show that \textsc{VR Voter $k$-Partition} 
and \textsc{GS Voter $k$-Partition} are NP-complete, too. 
While our proof also proceeds
by a reduction from \textsc{$k$-Partition Into Cliques},
our argument is quite different: we use $3\times 3$ minors, 
whereas \citeauthor{ELP13}~[\citeyear{ELP13}] use $2\times 4$ 
minors. The advantage of our approach is that it also applies to the VR domain, whose forbidden minor characterization does not use $2\times 4$ minors.

We start by considering the domains BR, MR, WR and VR. Then we explain why our proof approach also works for the GS domain. It is immediate that 
{\sc $X$ Voter $k$-Partition} is in NP for each domain $X$
that we consider, so in what follows we focus on NP-hardness proofs.
 
 \begin{theorem}\label{thm:vr-npcomplete}
 \textsc{X Voter $k$-Partition} for $X\in\{$BR, MR, WR, VR$\}$ is 
 {\em NP}-complete for each $k\geq 3$.
 \end{theorem}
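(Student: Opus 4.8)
The plan is to establish NP-hardness by a reduction from \textsc{$k$-Partition Into Cliques}; membership in NP has already been noted. The guiding observation is that a \emph{Condorcet cycle}, i.e.\ three votes that restrict on some candidate triple $\{a,b,c\}$ to $a\succ b\succ c$, $b\succ c\succ a$, and $c\succ a\succ b$, is simultaneously a $1$-minor, a $2$-minor, and a $3$-minor: each of $a,b,c$ occupies the top, the middle, and the bottom position exactly once across the three votes. Hence a profile containing such a cycle lies outside each of BR, MR, WR, and VR, whereas a profile containing no $1$-, $2$-, or $3$-minor lies in all four. This means a single construction, in which the only obstructions we ever create are Condorcet cycles, proves hardness for all four domains at once; this is precisely the feature that the $2\times 4$-minor reduction of \citeauthor{ELP13}~[\citeyear{ELP13}] lacks, since its obstruction is not a VR-violation.

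Given an instance $G=(V_G,E_G)$, I would introduce one \emph{vertex vote} $p_v$ for each $v\in V_G$, together with auxiliary candidates and a small family of \emph{gadget votes}, engineered so that, inside any one part of the partition, a Condorcet cycle appears if and only if that part contains two vertex votes $p_u,p_v$ with $\{u,v\}\notin E_G$. Granting this, partitioning the profile into $k$ subprofiles that all lie in $X$ is equivalent to coloring $V_G$ with $k$ colors so that every non-edge is bichromatic---i.e.\ to a proper $k$-coloring of the complement graph $\overline{G}$, equivalently to a partition of $V_G$ into $k$ cliques: within a part all pairs must be adjacent, so each part is a clique, and conversely a clique partition yields parts with no forbidden triple and hence no $j$-minor of any kind.

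The technical heart, and the step I expect to be the main obstacle, is that a $j$-minor is a $3\times 3$ object, so every obstruction involves three votes, whereas the constraint we must encode---``non-adjacent vertices land in different parts''---is pairwise. Two votes alone can never form a $j$-minor (three candidates cannot each be top across only two votes), so pairwise separation cannot be enforced directly. I would resolve this by using gadget votes to \emph{complete} a non-adjacent pair into a cycle: for each non-edge $\{u,v\}$ I would allocate a private candidate triple $T_{uv}$ on which $p_u$ and $p_v$ take two of the three cyclic orders, and then ensure that a third, cycle-completing vote on $T_{uv}$ is guaranteed to be present in whichever part contains both $p_u$ and $p_v$. Forcing such a \emph{base} vote into every one of the $k$ parts---e.g.\ by adding interchangeable base votes and pinning down their distribution across the parts by a counting argument backed up by further Condorcet-cycle gadgets---is the delicate point, since this forcing mechanism can itself only create triple obstructions and must not introduce any spurious cycle.

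Finally I would verify the two directions. For soundness, a clique partition induces an $X$-partition: each triple $T_{uv}$ is private, and within a part every pair of vertex votes is adjacent, so no three votes ever realize all three cyclic orders on a common triple, yielding no $1$-, $2$-, or $3$-minor and hence membership in even the strictest domain VR. For completeness, any $X$-partition induces a clique partition: a part containing a non-edge pair, together with its forced base vote, exhibits a Condorcet cycle and so lies outside BR, MR, WR, and VR alike. The reduction adds only $O(|V_G|^2)$ candidates and votes and is clearly polynomial, completing the argument for every $X\in\{$BR, MR, WR, VR$\}$ and every $k\ge 3$.
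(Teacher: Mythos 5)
Your high-level strategy coincides with the paper's: a reduction from \textsc{$k$-Partition Into Cliques}, one private candidate triple per non-edge, and the observation that a Condorcet cycle is simultaneously a $1$-, $2$-, and $3$-minor, so a single construction handles BR, MR, WR and VR at once. You also correctly identify the crux --- a $j$-minor needs three votes while the constraint to encode is pairwise --- but this is exactly where your proposal has a genuine gap. Your proposed fix is to add, for each non-edge, a dedicated cycle-completing ``base vote'' and then force one such vote into each of the $k$ parts. As you yourself note, two votes alone can never realize a $j$-minor, so interchangeable base votes cannot be kept apart by pairwise obstructions; separating them would require yet more three-vote gadgets, and the forcing mechanism becomes circular. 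You flag this as ``the delicate point'' but do not resolve it, and as sketched it does not go through.

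The paper closes this gap with two concrete design choices that avoid base votes altogether. First, for the triple $T^{i,j}$ associated with the non-edge $\{u_i,u_j\}$, \emph{every} other vertex-vote $v_t$ (for $t\neq i,j$) ranks the triple as $c^{i,j}\succ a^{i,j}\succ b^{i,j}$, completing the cycle with $v_i$'s order $a\succ b\succ c$ and $v_j$'s order $b\succ c\succ a$. Consequently, a part containing a non-edge pair plus \emph{any} third vote contains a Condorcet cycle, so the only way a part can contain a non-edge pair is to consist of exactly those two votes. Second, before building the profile the graph $G$ is padded to $G'$ by adding $k$ disjoint cliques $H_1,\dots,H_k$ of size $k+2$, each fully joined to $V$; by pigeonhole each part must contain two vertices of some $H_j$, and distinct parts capture distinct cliques, so every part contains an adjacent pair and hence cannot be a bare two-element non-edge. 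This padding (which preserves the answer to the clique-partition instance) is the missing ingredient your counting argument would need; without it, or an equivalent device, the backward direction of your reduction fails, since a part consisting of exactly two non-adjacent vertex votes contains no forbidden minor.
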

\begin{proof}
Given a graph $G=(V,E)$, we first create a graph $G'$ so that 
$G'$ contains $k$ cliques of size $k+2$ and
is a yes-instance of \textsc{$k$-Clique Partition} 
if and only if $G$ is.
For each $i\in [k]$,  
let $H_i$ be a clique with vertex set $U_i$, $|U_i|=k+2$.
To construct the graph $G'$, we connect
the vertices of all these cliques to all vertices of $G$. 
That is, $G'$ is the graph with vertex set 
$V' = V \cup U_1\cup\ldots\cup U_k$ and edge set
$$
E' =E \cup \{\{u,v\} : u\in V, v\in \cup_{i\in [k]} U_i\}.
$$ 
If $V'_1,\ldots,V'_k$ is a partition of $G'$ into $k$ cliques, then each $V'_i$, $i\in [k]$, restricted to $G$ is either a clique or 
an empty set.
Conversely, if $V_1,\ldots, V_k$ is a partition of $G$ into $k$ cliques, then $V'_1, \dots, V'_k$, where
$V'_i = V_i\cup U_i$, 
is a partition of $G'$ into $k$ cliques.  

We are now ready to create an instance of 
\textsc{VR voter $k$-partition}. For convenience, 
renumber the vertices of $G'$ as $u_1, \dots, u_n$.

\noindent \textbf{Instance} 
In our instance, there are three candidates for each pair 
of vertices that does
not form an edge of $G'$, i.e., for each 
$\{u_i, u_j\}\in (V'\times V')\setminus E'$
we set
$T^{i, j}=\{a^{i,j},b^{i,j},c^{i,j}\}$ and let 
$$
C=\bigcup_{\{u_i, u_j\}\in (V'\times V')\setminus E'} T^{i, j}.
$$
We set $P=(v_1, \dots, v_n)$, where $n=|V'|$.
In each vote, the triples of candidates are 
ordered according to their indices:
If $i<\ell$ or $i=\ell$ and $j<r$ then in each vote
all candidates in $T^{i, j}$ appear above
all candidates in $T^{\ell, r}$.
Further, if $\ell\neq i, j$ then in $v_\ell$ 
candidates in $T^{i, j}$
are ranked as $c^{i, j} \succ a^{i, j} \succ b^{i, j}$.
Finally, in vote $v_i$ these candidates are ranked as
$a^{i,j}\succ b^{i,j}\succ c^{i,j}$ and an $v_j$ they are ranked
as
$b^{i,j}\succ_j c^{i,j}\succ_j a^{i,j}$.

Suppose $V_1,\ldots,V_k$ is a partition of $G'$ into cliques.
We claim that for each $\ell\in [k]$ the profile $(v_i)_{u_i\in V_\ell}$ is in VR (and hence also in BR, MR and WR), i.e., it contains no $j$-minors for $j=1, 2, 3$.
Indeed, for a triple of votes $u, v, w$ and a triple of candidates 
$a, b, c$ to form a $j$-minor for some $j=1, 2, 3$, it has to be the case that $\{a, b, c\}=T^{r,s}$ for some 
$\{u_r, u_s\}\in (V'\times V')\setminus E'$ and 
$v_r, v_s\in \{u, v, w\}$, 
a contradiction with $V_\ell$ forming a clique.

Conversely, let $P_1,\ldots,P_k$ be a partition of $P$ into $k$  value-restricted profiles (the same argument works if each of these profiles is in BR, or if each of them is in MR, or in WR). 
Note that for each $\ell\in [k]$
and each $j=1, 2, 3$ the profile $P_\ell$ does not contain a $j$-minor. We will argue that each vertex set $V_\ell = \{u_i: v_i\in P_\ell\}$ forms a clique in $G'$. 

Observe first that if we have $u_r, u_s\in V_\ell$ for 
some $\ell\in [k]$ and $\{u_r, u_s\}\not\in E'$, 
then $V_\ell =\{u_r, u_s\}$.
Indeed, if $V_\ell$ contains another vertex $u_t$, where
$t\neq r, s$, then $v_t$ ranks the alternatives
in $T^{r,s}$ as 
$c^{r, s} \succ a^{r, s} \succ b^{r, s}$
and therefore $v_r, v_s, v_t$
and $T^{r,s}$ form a $j$-minor
for each $j=1, 2, 3$.

It follows that each set $V_\ell$ is either a clique in $G'$
or a pair of vertices with no edge between them. We will now use
a counting argument to rule out the latter possibility.

Recall that each of the disjoint cliques $H_1, \dots, H_k$ is of size $k+2$.
Therefore, by the pigeonhole principle, for each $j\in [k]$
there exists a set $V_{\ell(j)}$ 
such that $|V_{\ell(j)}\cap H_j|\ge 2$.
Moreover, if $j\neq j'$ then $\ell(j)\neq \ell(j')$.
Indeed, suppose that $\ell(j)=\ell(j')$ for some $j\neq j'$, 
and consider the set $V_{\ell(j)}$. It contains at least 
four distinct vertices, but it is not a clique, as there
are no edges between $H_j$ and $H_{j'}$, and we have
argued that this is not possible.

Hence, the mapping $j\mapsto \ell(j)$ is a bijection.
That is, each set $V_\ell$, $\ell\in [k]$, 
contains two vertices from the same clique.
Hence, no such set consists of two vertices
that are not connected, and we have argued 
that in this case $V_\ell$ must be a clique.
This proves our claim.
\end{proof}

We will now explain how to extend the proof of Theorem~\ref{thm:vr-npcomplete} to group-separable preferences.
\begin{theorem}\label{gsnpcomplete}
\textsc{GS Voter $k$-partition} is {\em NP}-complete for each $k\geq 3$.
\end{theorem}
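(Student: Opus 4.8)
The plan is to reuse the reduction from Theorem~\ref{thm:vr-npcomplete} essentially verbatim, since the construction there was already engineered around $j$-minors rather than $2\times 4$ minors. I take as the instance of \textsc{GS Voter $k$-Partition} exactly the profile $P$ over $C = \bigcup T^{i,j}$ built from the augmented graph $G'$, with the same votes $v_1,\dots,v_n$. The crucial observation is that the three orderings arising on a triple $T^{r,s}$ — namely $a^{r,s}\succ b^{r,s}\succ c^{r,s}$ (in $v_r$), $b^{r,s}\succ c^{r,s}\succ a^{r,s}$ (in $v_s$), and $c^{r,s}\succ a^{r,s}\succ b^{r,s}$ (in every other vote) — place each of $a^{r,s},b^{r,s},c^{r,s}$ in the \emph{middle} position exactly once, and hence form a $2$-minor. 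Since {\sc GS} is medium-restricted (Theorem~\ref{thm:gs-minor}), this $2$-minor is forbidden, so the backward direction of Theorem~\ref{thm:vr-npcomplete} carries over unchanged: if $P=P_1\cup\cdots\cup P_k$ with each $P_\ell\in{\sc GS}$, then no part may contain $v_r,v_s,v_t$ for a non-edge $\{u_r,u_s\}$ and a third index $t$, so each $V_\ell=\{u_i:v_i\in P_\ell\}$ is either a clique or a non-edge pair, and the same pigeonhole counting over $H_1,\dots,H_k$ forces every $V_\ell$ to be a clique.

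For the forward direction I would take a clique partition $V_1,\dots,V_k$ of $G'$ and show each profile $(v_i)_{u_i\in V_\ell}$ lies in {\sc GS}. By Theorem~\ref{thm:gs-minor} it suffices to check two things: that the profile is medium-restricted and that it avoids the forbidden $2\times 4$ minor. Medium-restriction is exactly the argument already given in Theorem~\ref{thm:vr-npcomplete} (a $j$-minor must live inside a single triple $T^{r,s}$ and would require both $v_r$ and $v_s$, contradicting that $V_\ell$ is a clique).

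The main obstacle, and the only genuinely new ingredient, is ruling out the $2\times 4$ minor from Theorem~\ref{thm:gs-minor}; I would exploit the rigid block structure of the construction. Every vote orders the triples $T^{i,j}$ in the same index order, so for any two votes and any four candidates not all contained in a single triple, the four candidates split into triple-blocks that appear in the same relative order in both votes. Call a pair of linear orders on four elements \emph{block-reducible} if the ground set admits a nontrivial partition into parts ranked as contiguous blocks, in the same order, by both orders; this property is invariant under renaming candidates and swapping the two votes, i.e.\ under the isomorphism defining minors. Any four candidates spanning at least two triples are block-reducible, whereas the forbidden minor $a\succ_u b\succ_u c\succ_u d$, $b\succ_v d\succ_v a\succ_v c$ is block-\emph{irreducible}: the nontrivial top-segments of $u$ are $\{a\},\{a,b\},\{a,b,c\}$ and those of $v$ are $\{b\},\{b,d\},\{a,b,d\}$, and these families share no common member, so $u$ and $v$ have no common nontrivial prefix and hence admit no block decomposition (the top block of any such decomposition would be a common nontrivial prefix). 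Since four candidates cannot fit inside one triple, the forbidden $2\times 4$ minor occurs in \emph{no} subprofile of $P$, in particular not in the clique profiles; together with medium-restriction this yields $(v_i)_{u_i\in V_\ell}\in{\sc GS}$ and completes the reduction.
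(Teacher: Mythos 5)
Your proposal is correct, and it follows the paper's route almost exactly: the same reduction as Theorem~\ref{thm:vr-npcomplete}, the same observation that each triple $T^{r,s}$ realizes a $2$-minor (so medium-restriction of \textsc{GS} makes the backward direction and the clique/pigeonhole argument carry over verbatim), and the same reduction of the forward direction to excluding the single $2\times 4$ forbidden minor of Theorem~\ref{thm:gs-minor}. The one place you genuinely diverge is in how that $2\times 4$ minor is excluded. The paper argues directly: the candidates $a$ and $d$ of the minor cannot lie in the same triple (contiguity of $T^{i,j}$ in $u$ would force $b,c$ into it as well, exceeding size $3$), so $d$ lies in a later triple, whence \emph{every} vote, including $v$, ranks $a$ above $d$, contradicting $d\succ_v a$. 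You instead isolate an isomorphism-invariant property (block-reducibility), show every $2\times 4$ subprofile of $P$ has it because four candidates must span at least two consistently ordered triple-blocks, and show the forbidden minor lacks it via the no-common-nontrivial-prefix computation. Both arguments rest on the same structural fact --- the triples are contiguous blocks appearing in a fixed global order in every vote --- but yours is slightly more general: it shows at once that \emph{no} block-irreducible $2$-vote minor can appear anywhere in $P$, which would let you exclude other $2\times\ell$ minors with the same prefix check, whereas the paper's argument is shorter and tailored to the specific minor at hand. Both are complete and correct.
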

\begin{proof}
We use the same reduction as in the proof of Theorem~\ref{thm:vr-npcomplete}. 
Suppose the resulting profile $P$ can be partitioned into 
$k$ profiles $P_1, \dots, P_k$ so that each $P_t$ is group-separable.
Then, in particular, each $P_t$ is in MR and hence corresponds to a clique in $G'$.

Conversely, suppose the graph $G'$ can be partitioned into $k$ cliques $V_1, \dots, V_k$, and let $P_1, \dots, P_k$ be the respective partition of $P$. 
The proof of Theorem~\ref{thm:vr-npcomplete} shows that each $P_t$ does not contain a $2$-minor. Hence, by Theorem~\ref{thm:gs-minor}
it remains to argue that each $P_t$ does not contain the $2\times 4$ minor
$a\succ_u b\succ_u c \succ_u d$, $b\succ_v d\succ_v a\succ_v c$.
Suppose for the sake of contradiction that for some $t\in [k]$ the 
profile $P_t$ contains this minor; abusing notation somewhat, 
assume that $a, b, c, d \in C$ and $u, v\in P_t$. 
Consider the triple $T^{i, j}$ 
such that $a\in T^{i, j}$. It cannot be the case that 
$d\in T^{i, j}$, because $a\succ_u b\succ_u c \succ_u d$ 
would imply $b, c\in T^{i, j}$, but we have $|T^{i, j}|=3$. 
Then $d\in T^{\ell, r}$, where $\ell>i$
or $\ell=i$ and $r>j$. But then all other voters in $P$, 
including $v$, rank $a$ above $d$, a contradiction.
\end{proof}


\section{Group-separability on a Caterpillar}\label{sec:gs-cat}
 In this section, we consider profiles that are group-separable on caterpillar graphs.
 A \textit{caterpillar} is a binary tree in which each internal node has at least one child that is a leaf. 
 Let $E$ be a caterpillar with $m$ leaves; 
 observe that is has $2m-1$ vertices. For each $i=1, \dots, m-2$, the tree $E$ has exactly one leaf at depth $i$; 
 we will denote this leaf by $c_i$, and denote the two leaves at depth $m-1$ 
 by $c_{m-1}$ and $c_m$. We will refer to $E$ by $(c_1,\ldots,c_m)$.
 In what follows, given a caterpillar of this form, 
 it will be convenient to denote the set of candidates 
 $\{c_i, \dots, c_j\}$, where $1\le i\le j\le m$, by $C_{[i, j]}$.
 
 Using this notation, we can say that $V$ is group-separable on a caterpillar $(c_1, \dots, c_m)$ if for every $v\in V$ and every $i\in [m-1]$ it holds that $c_i\succ_v C_{[i+1, m]}$ or $C_{[i+1, m]}\succ_v c_i$. 
 Let {\sc Cat-GS} denote the domain
 of all profiles that are group-separable on a caterpillar.
For proofs of the following two propositions, see Appendix~\ref{app:gs-cat}.
 \begin{proposition}\label{prop:gs-cat-charac}
A profile $P$ is group-separable on caterpillar $(c_1,c_2,\ldots,c_m)$ if and only if for every $v\in P$ 
there is a subset $C'\subseteq C$ such that 
$C'\succ_v C\setminus C'$, and $v$ ranks the candidates in $C'$ in increasing order of indices and candidates in $C\setminus C'$
in decreasing order of indices.
\end{proposition}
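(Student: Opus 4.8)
The plan is to observe first that both sides of the claimed equivalence are conjunctions of conditions that each quantify over a single vote: a profile $P$ is group-separable on $(c_1,\ldots,c_m)$ iff every $v\in P$ satisfies, for all $i\in[m-1]$, that $c_i\succ_v C_{[i+1,m]}$ or $C_{[i+1,m]}\succ_v c_i$; and the right-hand condition likewise asserts a property of each individual $v\in P$. Hence it suffices to prove the equivalence for a single vote $v$, and I would state and prove it in that form. For a fixed $v$, call a candidate $c_i$ with $i\in[m-1]$ \emph{high} if $c_i\succ_v C_{[i+1,m]}$ and \emph{low} if $C_{[i+1,m]}\succ_v c_i$; the caterpillar condition says precisely that every $c_i$, $i\in[m-1]$, is high or low, while $c_m$ (the pivot at the deepest internal node) carries no constraint of its own.

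For the forward direction, assuming $v$ satisfies the caterpillar condition, I would set $C'=\{c_i : i\in[m-1],\ c_i\text{ is high}\}\cup\{c_m\}$ and verify the three required properties by elementary index comparisons. To see $C'\succ_v C\setminus C'$, take $c_i\in C'$ and $c_j\in C\setminus C'$; since $c_m\in C'$ we have $i\neq j$, and if $i<j$ then $c_i$ is high and $c_j\in C_{[i+1,m]}$ gives $c_i\succ_v c_j$, whereas if $i>j$ then $c_j$ is low and $c_i\in C_{[j+1,m]}$ gives $c_i\succ_v c_j$. That $v$ ranks $C'$ in increasing index order follows because for $c_i,c_k\in C'$ with $i<k$ the candidate $c_i$ is high (as $i\le m-1$) and $c_k\in C_{[i+1,m]}$, so $c_i\succ_v c_k$; and that $v$ ranks $C\setminus C'$ in decreasing index order follows because for $c_j,c_l\in C\setminus C'$ with $j<l$ the candidate $c_j$ is low and $c_l\in C_{[j+1,m]}$, so $c_l\succ_v c_j$.

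For the converse, I would start from a set $C'$ witnessing the right-hand property and recover the caterpillar condition by a case split on each $i\in[m-1]$. If $c_i\in C'$, then for any $c_k\in C_{[i+1,m]}$ either $c_k\in C'$, in which case the increasing order on $C'$ together with $k>i$ gives $c_i\succ_v c_k$, or $c_k\in C\setminus C'$, in which case $C'\succ_v C\setminus C'$ gives $c_i\succ_v c_k$; hence $c_i$ is high. Symmetrically, if $c_i\in C\setminus C'$, then every $c_k\in C_{[i+1,m]}$ lies above $c_i$, either because it is in $C'$ or because it is a higher-indexed element of the decreasingly-ordered set $C\setminus C'$, so $c_i$ is low. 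In either case the condition at index $i$ holds.

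The argument involves no genuinely hard step; the main points requiring care are the treatment of the pivot $c_m$, which has no high/low status and must be placed into $C'$ explicitly for the forward direction to go through, and keeping the two orientations straight (the top block $C'$ is ordered by increasing index, the bottom block $C\setminus C'$ by decreasing index). I would also double-check that the trivial constraint at $i=m-1$, which only compares $c_{m-1}$ with the singleton $C_{[m,m]}=\{c_m\}$, is subsumed by the same casework, which it is.
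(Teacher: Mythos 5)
Your proof is correct and follows essentially the same route as the paper: both directions hinge on taking $C'$ to be the set of candidates ranked above the rest of their suffix (your ``high'' candidates together with $c_m$, which coincides with the paper's ``$c_j$ ranked first in $v|_{C_{[j,m]}}$''), and then checking the ordering conditions by index comparisons. Your write-up is somewhat more explicit about the pivot $c_m$ and the $i>j$ case of $C'\succ_v C\setminus C'$, but there is no substantive difference.
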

\begin{proposition}\label{prop:gs-cat-closed}
{\sc Cat-GS} is closed under candidate deletion.
\end{proposition}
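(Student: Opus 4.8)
The plan is to reduce the general statement to the deletion of a single candidate: since the restriction of a profile to a subset of candidates can be obtained by striking out the excluded candidates one at a time, it suffices to prove that if $P$ is group-separable on a caterpillar $(c_1,\dots,c_m)$, then $P|_{C\setminus\{c_j\}}$ is again in {\sc Cat-GS} for every $j\in[m]$. The full statement then follows by iterating over the candidates to be removed.

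My candidate caterpillar for the restricted profile is the one obtained by deleting $c_j$ from the sequence, namely $(c_1,\dots,c_{j-1},c_{j+1},\dots,c_m)$; any list of $m-1$ distinct candidates is a legitimate caterpillar on $m-1$ leaves, so this is well defined after re-indexing. Crucially, I would verify membership in {\sc Cat-GS} through the characterization of Proposition~\ref{prop:gs-cat-charac} rather than the definition directly, because that characterization is phrased purely as a monotone split of each vote and is therefore transparently stable under restriction.

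Concretely, fix a vote $v\in P$ and let $C'\subseteq C$ be the witnessing set guaranteed by Proposition~\ref{prop:gs-cat-charac}: we have $C'\succ_v C\setminus C'$, with $v$ ranking $C'$ in increasing order of index and $C\setminus C'$ in decreasing order of index. I claim that $C'\setminus\{c_j\}$ witnesses the characterization for the restricted vote $v|_{C\setminus\{c_j\}}$ on the re-indexed caterpillar. First, restriction preserves relative order, so $C'\setminus\{c_j\}\succ_{v|_{C\setminus\{c_j\}}}(C\setminus\{c_j\})\setminus(C'\setminus\{c_j\})$ still holds. Second, deleting a single element from a sequence that is monotone in the original indices leaves a monotone sequence, and since deletion preserves the relative order of the indices, monotonicity in the original indexing is exactly monotonicity in the new indexing; hence $v|_{C\setminus\{c_j\}}$ ranks $C'\setminus\{c_j\}$ in increasing order of the new indices and ranks the complementary set in decreasing order. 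Applying the backward direction of Proposition~\ref{prop:gs-cat-charac} to the new caterpillar, $v|_{C\setminus\{c_j\}}$ is group-separable on it, and since $v$ was arbitrary, so is the whole profile $P|_{C\setminus\{c_j\}}$.

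I do not expect a genuine obstacle here; the only point requiring a little care is the bookkeeping of the re-indexing, i.e.\ confirming that ``increasing/decreasing order of index'' is an order-type property that survives deleting one coordinate. Once this observation is in place the argument is immediate, and closure under deletion of an arbitrary candidate subset follows by induction on the number of deleted candidates.
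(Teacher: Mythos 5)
Your proof is correct and follows essentially the same route as the paper's: both reduce to deleting a single candidate and verify membership via the split-characterization of Proposition~\ref{prop:gs-cat-charac}, observing that removing $c_j$ from the witnessing set $C'$ (or its complement) preserves the monotone index ordering on the shortened caterpillar. Your extra care about re-indexing is a harmless elaboration of the same argument.
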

Recall that closure under candidate deletion is necessary for a domain to admit a characterization by forbidden minors (Proposition~\ref{LL}).

\noindent\textbf{Recognition Algorithm}
The {\sc Cat-GS} domain admits a simple recognition algorithm. Let us say that a candidate is \textit{polarizing} for vote $v$ if she is ranked either first or last in $v$; let $\pi(v)$ denote the set of polarizing candidates for vote $v$. Given a profile $P$, the algorithm proceeds in $m-2$ steps. At each step, it looks for a candidate that is polarizing for all votes. If some such candidate is found, it is removed from all votes, and the algorithm proceeds to the next step. If no such candidate is found, the algorithm reports that $P$ does not belong to {\sc Cat-GS}. Now, suppose the algorithm succeeds. 
Relabel the candidates so that the candidate identified at the $j$-th step is labeled as $c_j$, and the two candidates that remain after the algorithm terminates are labeled as $c_{m-1}$ and $c_m$. Then the 
profile $P$ is caterpillar group-separable on $(c_1, \dots, c_m)$.
The correctness of the algorithm is immediate from Proposition~\ref{prop:gs-cat-closed}.



We are now ready to present our minor-based characterization of {\sc Cat-GS}.

\begin{theorem}\label{thm:gs-cat-minor}
A profile $P$ is group-separable on a caterpillar if and only if 
(1) it is medium restricted and 
(2) it does not contain any of the four $2\times 4$ forbidden minors given by
$$
 a\succ_u\{b,c\}\succ_u d \text{ and } b\succ_v\{a,d\}\succ_v c.
 $$
\end{theorem}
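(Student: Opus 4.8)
The plan is to prove both directions, with the forward implication being routine and the backward implication carrying the real content via a single structural lemma about polarizing candidates.

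For the forward direction, suppose $P$ is group-separable on a caterpillar. Since a caterpillar is in particular an ordered binary tree, $P\in\textsc{GS}$, so by Theorem~\ref{thm:gs-minor} it is medium-restricted, which gives~(1). For~(2) I would observe that \textsc{Cat-GS} is closed under deleting votes (trivially) and under deleting candidates (Proposition~\ref{prop:gs-cat-closed}), hence under taking subprofiles; so it suffices to check that each of the four minors is itself not in \textsc{Cat-GS}. In each such minor the polarizing candidates of $u$ are $\{a,d\}$ and those of $v$ are $\{b,c\}$, so $\pi(u)\cap\pi(v)=\varnothing$; but Proposition~\ref{prop:gs-cat-charac} forces the first caterpillar vertex $c_1$ to lie in $\pi(v)$ for every vote $v$, so any \textsc{Cat-GS} profile has a candidate in $\bigcap_v\pi(v)$. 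Hence no forbidden minor is \textsc{Cat-GS}, and a \textsc{Cat-GS} profile can contain none of them.

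For the backward direction I would lean on the recognition algorithm, whose correctness is already established from Proposition~\ref{prop:gs-cat-closed}: it is enough to show the algorithm never reports failure, i.e.\ that at every stage the current profile has a candidate that is polarizing for all its votes. Both hypotheses---medium-restriction and avoidance of the four minors---are hereditary under candidate deletion (not containing a fixed minor is preserved by passing to subprofiles), so every intermediate profile again satisfies them, and the whole argument reduces to the key lemma: \emph{if $P$ has at least three candidates, is medium-restricted, and avoids the four minors, then $\bigcap_{v\in P}\pi(v)\neq\varnothing$.} To prove this I would form the graph $H$ on vertex set $C$ whose edges are the pairs $\pi(v)=\{\tp(v),\mathrm{bot}(v)\}$ ranging over all votes $v$ (each vote gives one genuine edge, since with at least two candidates its top and bottom differ). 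A candidate polarizing for all votes is exactly a vertex incident to every edge of $H$, so assume no such vertex exists. An edge family with empty common intersection must contain either two disjoint edges or three edges forming a triangle. If $H$ contains disjoint edges $\pi(u)=\{a,d\}$, $\pi(v)=\{b,c\}$, then restricting $u,v$ to $\{a,b,c,d\}$ places the elements of $\pi(u)$ at the top and bottom of $u$ and those of $\pi(v)$ at the top and bottom of $v$, which is precisely one of the four forbidden minors---a contradiction. If instead $H$ contains a triangle on $\{x,y,z\}$ arising from three votes with polarizing pairs $\{x,y\},\{y,z\},\{x,z\}$, then in the restriction to $\{x,y,z\}$ the candidate \emph{not} in a vote's polarizing pair is its middle element, so $z,x,y$ are the respective middle candidates and the three votes form a $2$-minor, contradicting medium-restriction. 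Either way we reach a contradiction, establishing the lemma.

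The main obstacle, and the step I would treat most carefully, is this key lemma: the insight is that the obstruction to a global polarizing candidate is captured exactly by the elementary graph dichotomy (two disjoint edges versus a triangle), after which one must verify that the disjoint-edges case reproduces precisely the four listed $2\times 4$ minors, while the triangle case reproduces a $2$-minor. The forward implication and the hereditary bookkeeping in the induction are routine by comparison.
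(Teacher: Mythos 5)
Your proof is correct and follows essentially the same route as the paper's: both directions hinge on the observation that a \textsc{Cat-GS} profile must have a candidate that is polarizing for every vote, and your key lemma's dichotomy (two disjoint polarizing pairs yielding a $2\times 4$ minor, versus a triangle of pairwise-intersecting pairs yielding a $2$-minor) is exactly the case analysis the paper carries out inline on votes $u$, $v$, $w$ in its proof via the recognition algorithm. Packaging that case analysis as the structure theorem for intersecting edge families is a clean presentational touch but does not change the argument.
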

\begin{proof}
It is easy to see that if a profile is group-separable on a caterpillar, it
satisfies conditions~(1) and~(2);
for the formal proof, see Appendix~\ref{app:gs-cat}.

For the converse direction, we show that if our recognition algorithm fails on $P$, then $P$ contains a forbidden $2\times 4$ minor given by condition~(2) or a $2$-minor. Suppose our recognition algorithm fails at step $j$, 
$j\le m-2$, i.e., there is no candidate at that step that is polarizing for all votes. From now on, we consider the restriction of $P$ to the remaining candidates.

Consider a vote $u$, and let $a$ and $b$ be the two polarizing candidates for that vote, so that $\pi(u)=\{a, b\}$. We know that there is some vote $v$ such that $a\not\in\pi(v)$. If $b\not\in\pi(v)$
either, i.e., $\pi(v)=\{c, d\}$ and $\{a, b\}\cap\{c, d\}=\varnothing$, then the votes $u, v$
and candidates $a, b, c, d$ form a forbidden $2\times 4$ minor that satisfies condition~(2), and we are done. 

So it remains to consider the case where $\pi(v)=\{b, c\}$ for some
$c\not\in\{a, b\}$. In this case, $b$ is polarizing for both $u$ and $v$; hence, there must exist a vote $w$ such that $b\not\in\pi(w)$.
Now, if $\pi(w)=\{a, c\}$, the votes $u, v, w$ and the candidates $a, b, c$ form a $2$-minor witnessing that the profile is not medium-restricted, and hence does not belong to {\sc Cat-GS}. Thus, there exists a candidate $d\not\in\{a, b, c\}$ such that $d\in\pi(w)$.
Hence, it must be the case that (i) $a\not\in\pi(w)$ or (ii) $c\not\in\pi(w)$.
But then since $b\not\in\pi(w)$, in case (i) we have $\pi(u)\cap\pi(w)=\varnothing$ and the votes $u, w$ and candidates $\pi(u)\cup\pi(w)$ form a forbidden $2\times 4$ minor. Similarly,
in case (ii) we have $\pi(v)\cap\pi(w)=\varnothing$ and the votes $v, w$ and candidates $\pi(v)\cup\pi(w)$ form a forbidden $2\times 4$ minor.
\end{proof}
It is interesting to compare the set of forbidden minors 
for the GS domain (Theorem~\ref{thm:gs-minor})
and 
for the {\sc Cat-GS} domain (Theorem~\ref{thm:gs-cat-minor}): 
while the former contains a single $2\times 4$ minor,
the latter contains four $2\times 4$ minors, each of which is obtained
by swapping the two central candidates in $0$, $1$ or $2$
votes of the original minor.

Given this minor-based characterization, we can then apply Theorem~\ref{thm:2axesinP}.
\begin{corollary}\label{cor:gs-cat-easy}
{\sc Cat-GS Voter $2$-Partition} is in {\em P}.
\end{corollary}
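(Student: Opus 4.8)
The plan is to invoke Theorem~\ref{thm:2axesinP} directly. That theorem states that \textsc{$X$ Voter $2$-Partition} is in P whenever $X$ can be characterized by a finite set $\mathcal Q$ of forbidden minors in which every minor is either a $2\times\ell$ minor (for some fixed constant $\ell$) or a $j$-minor for some $j\in\{1,2,3\}$. So the entire content of the corollary reduces to verifying that the domain {\sc Cat-GS} meets this structural hypothesis, and the witness is precisely the characterization established in Theorem~\ref{thm:gs-cat-minor}.

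First I would recall what Theorem~\ref{thm:gs-cat-minor} gives us: a profile $P$ is in {\sc Cat-GS} if and only if (1) it is medium-restricted and (2) it avoids the four listed $2\times 4$ minors of the form $a\succ_u\{b,c\}\succ_u d$ and $b\succ_v\{a,d\}\succ_v c$. I would then read off the corresponding forbidden-minor set $\mathcal Q$ explicitly. Condition~(1), medium-restrictedness, is by definition the absence of any $2$-minor, i.e.\ any $3\times 3$ minor in which each candidate occupies the second position exactly once; each such minor is a $j$-minor with $j=2$. Condition~(2) contributes four $2\times 4$ minors. Hence $\mathcal Q$ is the union of the (finite) collection of $2$-minors and these four $2\times 4$ minors. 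This is a finite set, every element is either a $2$-minor or a $2\times 4$ minor, and taking $\ell=4$ makes all the $2\times\ell$ minors have a common width. Thus the hypothesis of Theorem~\ref{thm:2axesinP} is satisfied with $\ell=4$ and $j$ ranging over $\{2\}$.

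Having checked the hypothesis, the conclusion is immediate: Theorem~\ref{thm:2axesinP} applies to $X={}${\sc Cat-GS}, yielding a polynomial-time algorithm for \textsc{{\sc Cat-GS} Voter $2$-Partition}, which is exactly the claim. There is essentially no independent obstacle here---the corollary is a clean instantiation. The only point that warrants a sentence of care is the implicit one already handled in Theorem~\ref{thm:gs-cat-minor}: {\sc Cat-GS} must be hereditary (closed under candidate and voter deletion) for a forbidden-minor characterization to exist in the first place, and this is supplied by Proposition~\ref{prop:gs-cat-closed} together with the fact that deleting voters cannot create a minor. Since that groundwork is already laid, the proof of the corollary is just the observation that the minor set produced by Theorem~\ref{thm:gs-cat-minor} has the required shape, so I would write it in a single short paragraph rather than as a multi-step argument.
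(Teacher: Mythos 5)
Your proposal is correct and matches the paper's own (one-line) justification exactly: the corollary is obtained by instantiating Theorem~\ref{thm:2axesinP} with the forbidden-minor set supplied by Theorem~\ref{thm:gs-cat-minor}, namely the finitely many $2$-minors (medium-restrictedness) together with the four $2\times 4$ minors, so the hypothesis holds with $\ell=4$. Your extra remark about hereditariness via Proposition~\ref{prop:gs-cat-closed} is accurate but not needed once the explicit characterization is in hand.
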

However, we cannot use the argument in the proof of 
Theorem~\ref{thm:vr-npcomplete} 
to show that {\sc Cat-GS Voter $3$-Partition}
is hard for $k\ge 3$. This is because a set of
of votes that corresponds to a clique in the input graph may contain a $2\times 4$ forbidden minor for {\sc Cat-GS}.
We provide an explicit example in Appendix~\ref{app:gs-cat}
(Example~\ref{ex:gscat}).
 A similar issue arises if we try to adapt the hardness proof of \citeauthor{ELP13}~[\citeyear{ELP13}] for the SP domain (which was based on $2\times 4$ minors) to the {\sc Cat-GS} domain. 
 Indeed, we cannot rule out the possibility
 that {\sc GS-Cat Voter $k$-Partition} is polynomial-time solvable for 
 $k>2$; however, it does not seem possible to prove this using the proof technique of Theorem~\ref{thm:2axesinP}.
 
 
\section{Conclusion and Future Directions}
Our work contributes to the study of structured and nearly-structured preferences.
We provide a complexity classification for \textsc{GS Voter $k$-Partition}, 
showing that this problem is easy for $k=2$ and hard for each value of $k\ge 3$.
For the domain {\sc Cat-GS}, we describe a simple recognition algorithm
and characterization in terms of forbidden minors that is a natural
consequence of that algorithm. This characterization implies 
that \textsc{Cat-GS Voter $2$-Partition} is in P as well.

The most immediate open problem suggested by our work is the complexity 
of \textsc{Cat-GS Voter $k$-Partition} for $k\ge 3$. We remark that
the complexity of \textsc{SC Voter $k$-Partition} for $k\geq 3$
is open as well \cite{JPE18}. One can also explore other notions
of closeness to group-separability and caterpillar group-separability:
such as, e.g., the minimum number of candidate swaps required to make the input
profile (caterpillar) group-separable; the variants
of this problem for the GS domain where the closeness measure
is based on voter/candidate deletion are NP-complete \cite{BCW16}.

A somewhat different direction is to ask whether a given profile $P$ 
can be split into two subprofiles $P_1$ and $P_2$ so that $P_1$
and $P_2$ belong to two {\em different} domains: e.g., so that
$P_1$ is single-peaked while $P_2$ is group-separable; solving problems of this type may require new proof techniques.


\bibliographystyle{named}
\bibliography{ijcai22}

\newpage

\appendix
\section{Omitted Material from Section~\ref{sec:prelim}}\label{app:prelim}
\begin{proof}[Proof of Proposition~\ref{prop:gs-tree}]
Consider a profile $P$ over a candidate set $C$ that is $T$-consistent for some rooted ordered binary tree $T$ whose leaves are labeled with candidates in $C$;
we will prove that $P$ is group-separable.
For each internal node $x\in T$, let $A(x)$ be the 
labels of the leaves of the subtree rooted at $x$. Let $\ell(x)$ be the left child of $x$ and let $r(x)$ be the right child of $x$. Let $L(x)$ be the set of 
labels of the leaves of the subtree rooted at $\ell(x)$,
and let $R(x)$ be the set of labels of the leaves of the subtree rooted at $r(x)$.

We will now show that for every subset of candidates
$A\subset C$ there is a non-trivial subset $B\subset A$ with $B\neq\varnothing, A$
such that for each $v\in P$ we have $B\succ_v A\setminus B$ or $A\setminus B\succ_v A$. Let $x$ be the maximum-depth node such that $A\subseteq A(x)$.
Then by our choice of $x$ we have 
$A\cap L(x)\neq \varnothing$ and $A\cap R(x)\neq \varnothing$.
Since $P$ is $T$-consistent, for every $v\in P$ we have 
$L(x)\succ_v R(x)$ or $R(x)\succ_v L(x)$, and, consequently, 
$L(x)\cap A\succ_v R(x)\cap A$ or $R(x)\cap A\succ_v L(x)\cap A$.
Thus, it suffices to set $B=A\cap L(x)$, so that $A\setminus B =A\cap R(x)$.

Conversely, suppose $P$ is a profile over a candidate set $C$ that is group-separable. We will construct a tree $T$ recursively as follows. 
If $C$ is a singleton, we create a tree consisting of a single node. 
Otherwise, by definition, 
there exists a set $A\subset C$ with $A\neq\varnothing, C$ such that 
for each voter $v\in P$ we have $A\succ_v C\setminus A$ or $C\setminus A\succ_v A$.
We consider the restriction of $P$ to $A$; this profile
is group-separable, so by the inductive hypothesis it is $T_L$-consistent
for some tree $T_L$. Similarly, the restriction of $P$ to $C\setminus A$
is group-separable and hence $T_R$-consistent for some tree $T_R$.
We now construct a tree $T$ that has root $r$ and two children; the tree rooted
at the left child is $T_L$ and the tree rooted at the right child is $T_R$.
It is immediate that $P$ is $T$-consistent.
\end{proof}

\section{Omitted Material from Section 3}\label{appendixa}

\begin{proof}[Proof of Theorem \ref{thm:2axesinP} (missing part)]

We will now argue that $G$ is bipartite if and only if $P$ can be split into two $X$-consistent subprofiles.

Indeed, suppose that $G$ is bipartite, and let $(U, V)$
be the respective partition of $P$.
Consider the set $U$; the argument for $V$ is similar.
Since $U$ forms an independent set in $G$, it does not contain a $2\times \ell$ forbidden minor. 
Moreover, for each triple $T=\{a, b, c\}$, either $P^i_{T,x}=\varnothing$ for some $x\in T$ or $T$ is $i$-dangerous. This means $P^i_{T,a},P^i_{T,b},P^i_{T,c}\neq \varnothing$ and so $P^i_{T,x}$ is not $X$-consistent for some unique $x\in T$, say $x=c$ and by 
Lemma~\ref{cl:onepertriple} $U\cap P^i_{T,a} = \varnothing$ or $U\cap P^i_{T,b} = \varnothing$.
So there exists an $x\in T$ such that 
$U\cap P^i_{T,x}=\varnothing$, and hence $U$ does not contain a forbidden $i$-minor with candidate set $T$. Hence, $U$ does not contain forbidden
minors of $X$, and is therefore $X$-consistent.

Conversely, suppose that $P$ can be partitioned as $U\cup V$
so that $U$ and $V$ are both $X$-consistent. We claim
that both $U$ and $V$ form independent sets in $G$. 
We prove this claim for $U$; the argument for $V$ is similar. 

If there is a pair of vertices $u, v\in U$
such that $\{u, v\}$ is a $2\times \ell$-edge, then $U$ contains
a $2\times \ell$ forbidden minor, a contradiction.
Now, suppose that $U$ contains an $i$-edge $\{u, v\}$.
This edge by construction corresponds to an $i$-dangerous triple $T=\{a, b, c\}$ with $P^i_{T,c}$  not $X$-consistent; we can assume without loss of generality that $u\in P^i_{T,a}$, 
$v\in P^i_{T,b}$. Since $P^i_{T,c}$ is not $X$-consistent, it cannot be the case that $P^i_{T,c}\subseteq V$ and hence $P^i_{T,c}\cap U\neq\varnothing$; let $w$ be some vote in $P^i_{T,c}\cap U$.
But then $u$, $v$, $w$ and $T$ form a forbidden $i$-minor,
a contradiction.
\end{proof}

\section{Omitted Material from Section~\ref{sec:gs-cat}}\label{app:gs-cat}

\begin{proof}[Proof of Proposition~\ref{prop:gs-cat-charac}]
By definition $v$ is group-separable on $(c_1,\ldots, c_m)$ if for every $j\in [m]$ candidate $c_j$ is ranked either first or last in the restriction of $v$ to $C_{[j, m]}$.
We define $C'$ to be the set of all candidates $c_j\in C$ that are ranked first in the restriction of $v$ to $C_{[j, m]}$. 
Then $C'$ forms an initial segment of $\succ_v$, 
so we obtain $C\succ_v C\setminus C'$.
Further, in $v|_{C'}$ the candidates appear in the increasing order of indices, since if $c_j\in C'$, it is ranked before $\{c_{j+1},\ldots, c_m\}\cap C'$. By the same argument, in $v|_{C\setminus C'}$ the candidates appear in the decreasing order of indices.

Conversely, if for every $v\in V$ there is an initial segment $C'\subset C$ of $\succ_v$ that is ranked by increasing index and a final segment $C\setminus C'$ ranked by decreasing index, then for $c_j\in C'$ we have $c_j\succ_v (C\setminus C')$ and $c_j\succ_v\{c_{j+1},\ldots,c_m\}\cap C'$, so $c_j\succ\{c_{j+1},\ldots,c_m\}$, showing that by definition $v$ is group-separable on the caterpillar $(c_1,\ldots, c_m)$.
\end{proof}

\begin{proof}[Proof of Proposition~\ref{prop:gs-cat-closed}]
Consider a profile $P$ that is group-separable on the caterpillar 
given by $(c_1, \dots, c_m)$. Let $P'$ be the profile obtained
by deleting a candidate $c_j$, $j\in [m]$, from every vote in $P$.
By Proposition~\ref{prop:gs-cat-charac} for every $v\in P$ there exists a $C'$ such that $C'\succ_v C\setminus C'$ and $C'$ is ordered by increasing index and $C\setminus C'$ by decreasing index. But then $v$ restricted to $C\setminus \{c_j\}$ satisfies $C'\setminus\{c_j\}\succ_v C\setminus C'\cup\{c_j\}$, and $C'\setminus\{c_j\}$ is ordered by increasing index, while
$C\setminus C'\cup\{c_j\}$ is ordered by decreasing index. So again, by Proposition~\ref{prop:gs-cat-charac}, $P'$ is group-separable on $(c_1,\ldots c_{j-1},c_j,\ldots c_m)$.

\end{proof}

\begin{proof}[Proof of Theorem 4]
If $P$ is not medium-restricted, then
$P$ is not group-separable by the forbidden minor characterization of group-separable profiles, and so in particular not group-separable on a caterpillar.

Now, suppose $P$ contains votes $u,v$ and candidates $a,b,c,d$ such that
 $a\succ_u\{b,c\}\succ_u d$ and $b\succ_v\{d,a\}\succ_v c$. 
 By Proposition~\ref{prop:gs-cat-closed} the domain {\sc Cat-GS} is closed under candidate 
 deletion, and, by definition, it is closed under voter deletion.
 Hence, if $P$ was group-separable on a caterpillar, then
 so would be the restriction of the profile $(u, v)$ to 
 $\{a,b,c,d\}$. But it is easy to see that the latter profile is not group-separable
 on a caterpillar, since no candidate appears in an extreme (top or bottom) position
 in both votes.
\end{proof}

\begin{example}\label{ex:gscat}
Suppose the graph $G'=(V', E')$ contains a subset of vertices $U=\{u, v, w, x\}$ such that the induced subgraph on $U$ is a matching: $(U\times U) \cap E' = \{\{u, v\}, \{w, x\}\}$.
The reduction in the proof of Theorem~\ref{thm:vr-npcomplete}
will create a voter for each vertex in $U$ and a triple of candidates for each pair of vertices in $(U\times U)\setminus E'$.
Let $\{a, b, c\}$ and $\{d, e, f\}$ be the triples of candidates that correspond to missing edges $\{u, w\}$ 
and $\{v, x\}$, respectively.
Then our reduction constructs a profile where 
$u$ and $v$ disagree on the ranking of $\{a, b, c\}$
(because $u$ is incident to $\{u, w\}$ and $v$ is not), 
and also 
$u$ and $v$ disagree on the ranking of $\{d, e, f\}$
(because $v$ is incident to $\{v, x\}$ and $u$ is not).
Renaming the candidates if necessary, we can assume that
$a\succ_u b$, $b\succ_v a$ and $d\succ_u e$, $e\succ_v d$.
Also, all voters rank $\{a, b, c\}$ above $\{e, d, f\}$.
Hence, $a\succ_u b \succ_u d\succ_u e$, 
$b\succ_v a \succ_v e\succ_v d$ forms a forbidden 
$2\times 4$ minor for {\sc Cat-GS}, even though $\{u, v\}$ is a clique in $G'$. Thus, a clique partition of $G'$ does not necessarily correspond to a partition of the constructed profile $P$ into caterpillar group-separable profiles.
\end{example}

\end{document}